\title{Matching in the Pi-Calculus (Technical Report)}
\author{Kirstin Peters \qquad\qquad Tsvetelina Yonova-Karbe \qquad\qquad Uwe Nestmann\vspace{0.5em}\\
TU Berlin, Germany}
\begin{document}

\maketitle


\begin{abstract}
	We study whether, in the \piCal, the {match prefix}|{a} conditional operator testing two names for (syntactic) {equality}|{is} expressible via the other operators.
	Previously, Carbone and Maffeis proved that matching is \emph{not} expressible this way under rather strong requirements (preservation and reflection of observables).
	Later on, Gorla developed a by now widely-tested set of criteria for encodings that allows much more freedom (\eg instead of direct translations of observables it allows comparison of calculi with respect to reachability of successful states).
    In this paper, we offer a considerably stronger separation result on the non-expressibility of matching using only Gorla's relaxed requirements.
    This report extends \cite{pyn14} and provides the missing proofs.
\end{abstract}


\section{Introduction}

In process calculi matching is a simple mechanism to trigger a process if two names are syntactically equal. The match prefix $ \match{a}{b}P $ in the $\pi$-calculus works as a conditional guard. If the names $ a $ and $ b $ are identical the process behaves as $ P $. Otherwise, the term cannot reduce further.

\paragraph{Motivation.} The principle of matching two names in order to reduce a term is also present in another form in any calculus with channel-based synchronisation, like CCS or the $\pi$-calculus.
The rule for communication demands identical (i.e.~matching) input/output channel names to be used by parallel processes. For example, the term $ \overline{a} \mid a.P $ may communicate on $ a $, but the term $ \overline{a} \mid b.P $ cannot communicate at all. Thus, the \piCal already contains a ``distributed'' form of the match prefix.\footnote{Of course, this observation extends to Linda-like tuple-based communication, and even to Actor-like message routing according to the matching object identity.} However, it is also an ``unprotected'' and therefore non-deterministic form of matching, as $ \overline{a} \mid a.P \mid a.Q $ allows for two different communications. This raises the natural question whether the match prefix can be encoded using the other operations of the calculus, or whether it is a basic construct.
Here we show that communication is indeed the \piCal construct that is closest to the match prefix.
Accordingly an encoding of the match prefix would need to translate the prefix into a (set of) communication step(s) on links that result from the translation of the match variables. These links have to be free|to allow for a guarding input to receive a value for a match variable|but they also have to be bound|to avoid unintended interactions between parallel match encodings. This kind of binding cannot be simulated by a \piCal operator different from the match prefix.
Thus the match prefix is a basic construct of the \piCal and cannot be encoded.
Note that, as shown by the use of the match prefix \eg in \cite{milnerParrowWalker92} for a sound axiomatisation of late congruence, in \cite{sangiorgi96} for a complete axiomatisation of open equivalence, {or}|{more recently}|{in} \cite{giunti13} for a session pi-calculus, the match prefix is regarded as useful, \ie it allows for applications that without the match prefix are not possible or more complicated to achieve.
Thus a better understanding of the nature of the match prefix contributes to current research.

\paragraph{Quality criteria.} Of course, we are not interested in trivial or meaningless encodings. Instead we consider only those encodings that ensure that the original term and its encoding show to some extent the same abstract behaviour. To analyse the quality of encodings and to rule out trivial or meaningless encodings, they are evaluated \wrt a set of quality criteria.
Note that stricter criteria that rule out more encoding attempts strengthen an \emph{encodability result}, \ie the proof of the existence of an encoding between two languages that respects the criteria. A stronger encodability result reveals a closer connection between the considered languages.
In contrast weaker criteria strengthen a \emph{separation result}, \ie the proof of the non-existence of an encoding between two languages \wrt the criteria. A stronger separation result illuminates a conceptional difference between two languages, \ie some kind of behaviour of the source language that cannot be simulated by the target language.
Unfortunately there is no consensus about what properties make an encoding ``good'' or ``good enough'' to compare two languages (compare \eg~\cite{parrow08}). Instead we find separation results as well as encodability results with respect to very different conditions, which naturally leads to incomparable results.
Among these conditions, a widely used criterion is \emph{full abstraction}, \ie the preservation and reflection of equivalences associated to the two compared languages. There are lots of different equivalences in the range of \piCal variants. Since full abstraction depends, by definition, strongly on the chosen equivalences, a variation in the respective choice may change an encodability result into a separation result, or vice versa \cite{gorlaNestmann}. Unfortunately, there is neither a common agreement about what kinds of equivalence are well suited for language comparison{|}again, the results are often incomparable.
To overcome these problems, and to form a more robust and uniform approach for language comparison, Gorla \cite{gorla} identifies five criteria as being well suited for separation as well as encodability results. By now these criteria are widely-tested (see \eg~\cite{gorla10}). Here, we rely on these criteria to measure the quality of encodings between variants of the \piCal.
Compositionality and name invariance stipulate structural conditions on a valid encoding. Operational correspondence requires that a valid encoding preserves and reflects the executions of a source term. Divergence reflection states that a valid encoding shall not exhibit divergent behaviour, unless it was already present in the source term. Finally, success sensitiveness requires that a source term and its encoding have exactly the same potential to reach a successful state.

\paragraph{Previous Results.} The question about the encodability of the match prefix is not a new one. In \cite{vig} Philips and Vigliotti proposed an encoding within the mobile ambient calculus (\cite{cardelliGordon00}). The \piCal as target language was considered by Carbone and Maffeis. They proved in \cite{carbone} that there exists no encoding of the \piCal into the \piCal (with only guarded choice and) without the match prefix. However the quality criteria used in \cite{carbone} are more restrictive than the criteria here. In particular they assume that visible communication links, \ie observables, are preserved and reflected by the encoding, \ie a source term and its encoding must have the same observables. This criterion is very limiting (\ie strict) even for an encoding between two variants of the same calculus. Thus, by using weaker quality criteria, we strengthen the separation result presented in \cite{carbone}. Note that we use \cite{carbone} as a base and starting point for our result. We discuss the differences to the proofs of \cite{carbone} in Section~\ref{sec:comparison}.
In the same paper Carbone and Maffeis show that the match prefix can be encoded by polyadic synchronisation.
Another positive result for a variant of the \piCal is presented by Vivas in \cite{vivas}. There, a modified version of the \piCal with a new operator, called blocking, is used to encode the match prefix.
In \cite{bodeiDeganoPriami05} the input prefix of the \piCal is replaced by a selective input that allows for communication only if the transmitted value is contained in a set of names specified in the selective input prefix. Accordingly selective input can be used as conditional guard, which can replace the match prefix. We discuss these encoding approaches and how they are related to our separation result in Section~\ref{sec:encodeMatchInOtherCalculi}.

\paragraph{Overview.} We start with an introduction of the considered variants of the \piCal in \S\ref{sec:processCalculi}. Then \S\ref{sec:quality} introduces the framework of \cite{gorla} to measure the quality of an encoding. Our separation result is presented in \S\ref{sec:encodeMatch}. In \S\ref{sec:discussion} we discuss the relation of our result to related work. We conclude with \S\ref{sec:conclusions}.


\section{The Pi-Calculus}
\label{sec:processCalculi}

Within this paper we compare two different variants of the \piCal|the full \piCal with (free choice and) the match prefix (\piT) and its variant without the match prefix (\piNM)|as they are described \eg in \cite{milnerParrowWalker92,Milner1999}.

Let $ \mc $ denote a countably infinite set of names and $ \overline{\mc} $ the set of co-names, \ie $ \overline{\mc} = \Set{ \overline{n} \mid n \in \mc } $. We use lower case letters $ a, a', a_1, \ldots, x, y, \ldots $ to range over names.
Moreover let $ \mc^k $ denote the set of vectors of names of length $ k $. Let $ \mc^* $ be the set of finite vectors of names. And let $ \proj{\tilde{x}}{i} = x_i $ whenever $ \tilde{x} = x_1, \ldots, x_n $ and $ 1 \leq i \leq n $.
For simplicity we adapt some set notations to deal with vectors of names, \eg $ \length{\tilde{x}} $ is the length of the vector $ \tilde{x} $, $ a \in \tilde{x} $ holds if the name $ a $ occurs in the vector $ \tilde{x} $, and $ \tilde{x} \cap \tilde{y} = \emptyset $ holds if the vectors $ \tilde{x} $ and $ \tilde{y} $ do not share a name.

\begin{definition}[Syntax]
	The set of process terms of the \emph{full \piCal}, denoted by $ \procPi $, is given by
	\begin{align*}
		P \deffTerms & \nullTerm \sep \piInput{x}{z}{P} \sep \piOutput{x}{y}{P} \sep \tau.P \sep \match{a}{b}P \sep\\
		& P_1 + P_2 \sep P_1 \mid P_2 \sep \Res{z}{P} \sep !P \sep \success
	\end{align*}
	
	\noindent
	where $ a, b, x, y, z \in \mc $.
	The processes of its subcalculus \piNM, denoted by $ \procPiNoMatch $, are given by the same grammar without the match prefix $ \match{a}{b}P $.
\end{definition}

\noindent
The term $ \nullTerm $ denotes an inactive process, \ie a process that can do nothing.
The remaining operators of the first line define guards.
The input prefix $ \piIn{x}{z} $ is used to describe the ability of receiving the value $ z $ over link $ x $ and, analogously, the output prefix $ \piOut{x}{y} $ describes the ability to send a value $ y $ over link $ x $. We call $ x $ the subject of an action prefix $ \piIn{x}{z} $ or $ \piOut{x}{y} $ and $ y $ or $ z $ its object. After receiving or sending a value the processes $ \piInput{x}{z}{P} $ and $ \piOutput{x}{y}{P} $ continue as $ P $.
The \emph{unobservable prefix} $ \tau $ defines an internal action. The process $ \tau.P $ can evolve invisibly to $ P $.
The \emph{match prefix} $ \match{a}{b} $ works as a conditional guard. It can be removed iff $ a $ and $ b $ are equal. If $ a = b $ then $ \match{a}{b}P $ continues as $ P $.
Sometimes we denote the $ a $ and $ b $ in $ \match{a}{b}P $ as \emph{match variables}.
The \emph{sum} $ P_1 + P_2 $ defines a \emph{choice}, \ie the process which either behaves as $ P_1 $ or $ P_2 $. Note that we consider two variants of the \piCal with finite \emph{free choice}, \ie choice terms are not limited to guarded summands.
$ P_1 \mid P_2 $ defines \emph{parallel composition}, \ie the process in which $ P_1 $ and $ P_2 $ may proceed independently, possibly interacting using shared links.
\emph{Restriction} $ \Res{z}{P} $ restricts the scope of the name $ z $ to the definition of $ P $. We use $ \Res{\tilde{x}}{P} $ to abbreviate $ \Res*{x_1}{ \Res*{x_2}{ \ldots (\Res{x_n}{P}) \ldots}} $ for some vector $ \tilde{x} = x_1, \ldots, x_n \in \mc^* $ of names.
$ !P $ denotes \emph{replication} and can be thought of as an infinite parallel composition of $ P $.
The term $ \success $ denotes \emph{success} (or \emph{successful termination}). It is introduced in order to compare the abstract behaviour of terms in different process calculi as described in Section~\ref{sec:quality}.

We use $ P, P', P_1, \ldots, Q, R, \ldots $ to range over processes. Let $ \freeNames{P} $, $ \boundNames{P} $, and $ \names{P} $ denote the sets of \emph{free names} in $ P $, \emph{bound names} in $ P $, and all \emph{names} occurring in $ P $, respectively. Their definitions are completely standard, \ie names are bound by restriction and as parameter of input and $ \names{P} = \freeNames{P} \cup \boundNames{P} $ for all $ P $.

We use $ \sigma $, $ \sigma' $, $ \sigma_1 $, \ldots to range over substitutions. A substitution is a finite mapping from names to names defined by a set $ \Set{ \subs{y_1}{x_1}, \ldots, \subs{y_n}{x_n} } $ of renamings, where the $ x_1, \ldots, x_n $ are pairwise distinct.
The term $ \Set{ \subs{y_1}{x_1}, \ldots, \subs{y_n}{x_n} }\left( P \right) $ is defined as the result of simultaneously replacing all free occurrences of $ x_i $ by $ y_i $ for $ i \in \Set{ 1, \ldots, n } $, possibly applying alpha-conversion to avoid capture or name clashes. For all names $ \mc \setminus \Set{ x_1, \ldots, x_n } $ the substitution behaves as the identity mapping, \ie as empty substitution.
Let $ \id \deff \emptyset $ denote the identity mapping.
We call the names that are modified by a substitution $ \sigma $ its \emph{domain}, \ie $ \dom{\sigma} \deff \Set{ x \mid \subs{y}{x} \in \sigma \wedge x \neq y } $. Similarly $ \codom{\sigma} \deff \Set{ y \mid \subs{y}{x} \in \sigma \wedge x \neq y } $ and $ \names{\sigma} \deff \dom{\sigma} \cup \codom{\sigma} $.
We naturally extend substitutions to co-names, \ie for all $ \sigma : \mc \to \mc $ and all $ n \in \mc $, $ \sigma\!\left( \overline{n} \right) = \overline{\sigma\!\left( n \right)} $.

\begin{figure}[t]
	\begin{align*}
		& \hspace{5.7em} \begin{array}{r@{\quad}c@{\quad}ll}
			P & \equiv & Q & \quad
				\begin{aligned}[t]
					& \text{ if } Q \text{ can be obtained from } P \text{ by renaming one or more of the}\\
					& \text{ bound names in P, silently avoiding name clashes }
				\end{aligned}
		\end{array}\\
		& \begin{array}{r@{\quad}c@{\quad}lcr@{\quad}c@{\quad}l}
			\match{a}{a}P & \equiv & P & \hspace{3em} & !P & \equiv & P \mid \; !P \vspace{0.5em}\\
			P_1 + \left( P_2 + P_3 \right) & \equiv & \left( P_1 + P_2 \right) + P_3 & & P_1 \mid \left( P_2 \mid P_3 \right) & \equiv & \left( P_1 \mid P_2 \right) \mid P_3\\
			P_1 + P_2 & \equiv & P_2 + P_1 & & P_1 \mid P_2 & \equiv & P_2 \mid P_1\\
			P + 0 & \equiv & P & & P \mid 0 & \equiv & P \vspace{0.5em}\\
			\Res{z}{\Res{w}{P}} & \equiv & \Res{w}{\Res{z}{P}}\\
			\Res{z}{\nullTerm} & \equiv & \nullTerm\\
			\Res*{z}{P_1 \mid P_2} & \equiv & P_1 \mid \Res{z}{P_2} & & \text{if } z \notin \freeNames{P_1}
		\end{array}
	\end{align*}
	\caption{Structural Congruence.}
	\label{fig:structuralCongruence}
\end{figure}

As suggested in \cite{gorla} we use a \emph{reduction semantics} to reason about the behaviour of \piT and \piNM.
The \emph{reduction semantics} of \piT and \piNM are jointly given by the transition rules in Figure~\ref{fig:reductionSemantics}, where \emph{structural congruence}, denoted by $ \equiv $, is given by the rules in Figure \ref{fig:structuralCongruence}. As usual, we use $ \equiv_{\alpha} $ if we refer to alpha-conversion (the first rule of Figure~\ref{fig:structuralCongruence}) only.
Note that Figure~\ref{fig:reductionSemantics} defines not only reduction rules ($ \step $) but also some rules for labeled steps ($ \labeledStep{\alpha} $, $ \labeledStep{\piIn{x}{y}} $, and $ \labeledStep{\piOut{x}{y}} $). They allow us to deal with arbitrary nestings of choice and parallel compositions. However, our separation result is based on the reduction semantics of the respective calculi. The reduction semantics in Figure~\ref{fig:reductionSemantics} coincide to the reduction semantics given be the rules:
\begin{center}
	$ \tau.P \step P \hspace*{2em} \piOutput{x}{y}{P} + P' \mid \piInput{x}{z}{Q} + Q' \step P \mid \Set{ \subst{y}{z} }Q \hspace*{2em} \dfrac{P \step P'}{P + Q \step P'} $\vspace{0.75em}\\
	$ \dfrac{P \step P'}{P \mid Q \step P' \mid Q} \hspace*{2em} \dfrac{P \step P'}{\Res{n}{P} \step \Res{n}{P'}} \hspace*{2em} \dfrac{P \equiv Q \quad \quad Q \step Q' \quad Q' \equiv P'}{P \step P'} $\vspace{0.75em}
\end{center}

\begin{figure}[t]
	\begin{align*}
		\begin{array}{l@{\quad}c@{\hspace{3em}}l@{\quad}c}
			\textbf{Input} & \dfrac{}{\piInput{x}{y}{P} + Q \labeledStep{\piIn{x}{y}} P} & \textbf{Output} & \dfrac{}{\piOutput{x}{y}{P} + Q \labeledStep{\piOut{x}{y}} P}\\
			\\
			\textbf{Com} & \dfrac{P \labeledStep{\piIn{x}{y}} P' \quad Q \labeledStep{\piOut{x}{y}} Q'}{P \mid Q \step P' \mid Q'} & \textbf{Tau} & \dfrac{}{\tau.P + Q \step P}\\
			\\
			\textbf{Sum-l} & \dfrac{P \labeledStep{\alpha} P'}{P + Q \labeledStep{\alpha} P'} & \textbf{Sum} & \dfrac{P \step P'}{P + Q \step P'}\\
			\\
			\textbf{Par-l} & \dfrac{P \labeledStep{\alpha} P'}{P \mid Q \labeledStep{\alpha} P' \mid Q} & \textbf{Par} & \dfrac{P \step P'}{P \mid Q \step P' \mid Q}\\
			\\
			& & \textbf{Res} & \dfrac{P \step P'}{\Res{z}{P} \step \Res{z}{P'}}\\
			\\
			\textbf{Cong-l} & \dfrac{P \equiv P' \quad P' \labeledStep{\alpha} Q' \quad Q' \equiv Q}{P \labeledStep{\alpha} Q} & \textbf{Cong} & \dfrac{P \equiv P' \quad P' \step Q' \quad Q' \equiv Q}{P \step Q}
		\end{array}
	\end{align*}
	\caption{Reduction Semantics.}
	\label{fig:reductionSemantics}
\end{figure}

Note that the structural congruence rule $ \match{a}{a}P \equiv P $ can be applied only in the full \piCal. It is this structural congruence rule (in combination with the last transition rule) that defines the semantics of the match prefix. However we can similarly define the semantics of the match prefix with the reduction rule $ \match{a}{a}P \step P $ without any influences on our results.
A reduction step $ P \step P' $ then denotes either a communication between an input and output on the same link or an internal step.
Let $ P \step $ (and $ P \noStep $) denote the existence (and non-existence) of a step from $ P $, \ie there is (no) $ P' $ such that $ P \step P' $. Moreover, let $ \steps $ be the reflexive and transitive closure of $ \step $. We write $ P \step^{\omega} $ if $ P $ can perform an infinite sequence of reduction steps. A sequence of reduction steps starting in a term $ P $ is called an \emph{execution} of $ P $. An execution is either finite, as $ P_0 \step P_1 \step \ldots \step P_n $, or infinite. A finite execution $ P_0 \steps P_n $ is \emph{maximal} if it cannot be further extended, \ie if $ P_n \noStep $, otherwise it is \emph{partial}.

Analysing the reduction rules we observe that substitutions can enable new communication steps|by unifying the links of input and output guards|but they never disable steps.

\begin{obs}
	\label{obs:subsSteps}
	Let $ P, P' $ be processes, \ie $ P, P' \in \procPi $ or $ P, P' \in \procPiNoMatch $, and $ \sigma : \mc \to \mc $ be a substitution.
	Then:
	\begin{align*}
		P \steps P' \impl \sigma\!\left( P \right) \steps \sigma\!\left( P' \right)
	\end{align*}
\end{obs}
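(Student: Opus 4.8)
The plan is to prove the statement by a two-layer induction. Since $\steps$ is the reflexive-transitive closure of $\step$, I would first reduce the claim to the single-step case by an outer induction on the length of the reduction sequence $P \steps P'$. The base case of zero steps is immediate from reflexivity, as $\sigma(P) = \sigma(P')$. In the inductive step $P \step P_1 \steps P'$ I would apply the single-step version to $P \step P_1$ and the induction hypothesis to $P_1 \steps P'$, and then concatenate the two resulting sequences. Hence the core of the argument is the single-step claim $P \step P' \impl \sigma(P) \step \sigma(P')$, which I expect to hold as a genuine one-step-to-one-step preservation (in line with the monotonicity observed just before the statement: a substitution can only unify names and thus enable, but never disable, an existing step).

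For the single-step claim I would proceed by structural induction on the derivation of $P \step P'$, using the compact set of reduction rules given after Figure~\ref{fig:reductionSemantics}. The axiom $\tau.P \step P$ is trivial since $\sigma(\tau.P) = \tau.\sigma(P)$. The inductive rules for choice and parallel composition are routine once the induction hypothesis is applied to the premise, because $\sigma$ distributes over $+$ and $\mid$. For restriction I would first alpha-convert so that the restricted name lies outside $\names{\sigma}$, which lets $\sigma$ commute with the restriction operator, and then apply the induction hypothesis. The congruence rule, with premises $P \equiv Q$, $Q \step Q'$, and $Q' \equiv P'$, reduces to the auxiliary fact that structural congruence is preserved by substitution, i.e. $Q \equiv R \impl \sigma(Q) \equiv \sigma(R)$; this in turn follows by induction on the derivation of $\equiv$, where the only noteworthy rule is the match axiom $\match{a}{a}P \equiv P$, for which $\sigma(\match{a}{a}P) = \match{\sigma(a)}{\sigma(a)}\sigma(P) \equiv \sigma(P)$ still holds because a substitution sends equal names to equal names. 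This is precisely why the argument goes through uniformly for both $\procPi$ and $\procPiNoMatch$.

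The genuinely delicate case is the communication axiom, where I expect the main obstacle to lie. Starting from $\piOutput{x}{y}{P} + P' \mid \piInput{x}{z}{Q} + Q' \step P \mid \Set{\subst{y}{z}}Q$, I would first alpha-convert the bound input name $z$ so that $z \notin \names{\sigma}$. Then $\sigma$ maps both subject occurrences of $x$ to the single name $\sigma(x)$, so the channels still match and applying $\sigma$ to the left-hand side yields a redex with matching subject $\sigma(x)$ and transmitted object $\sigma(y)$. Its reduct is $\sigma(P) \mid \Set{\subst{\sigma(y)}{z}}\sigma(Q)$, and it remains to check that this equals $\sigma(P \mid \Set{\subst{y}{z}}Q)$. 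This is the substitution-composition bookkeeping: since $z \notin \names{\sigma}$, applying $\Set{\subst{y}{z}}$ and then $\sigma$ agrees with applying $\sigma$ and then $\Set{\subst{\sigma(y)}{z}}$. I would isolate this commutation as a small lemma on substitutions, proved by induction on the structure of $Q$ with alpha-renaming handling capture, since it is the one place where careless name handling could break the argument; everything else is routine.
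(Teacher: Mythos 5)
Your proof is correct. The paper itself states this result as an observation \emph{without} proof, justified only by the preceding informal remark that substitutions can unify names and hence enable, but never disable, reduction steps; your two-layer induction|reducing to the single-step case and then inducting on the derivation, with the substitution-composition bookkeeping for the communication axiom and the preservation of structural congruence under substitution (including the match axiom $\match{a}{a}P \equiv P$, which is exactly why the argument covers both $\procPi$ and $\procPiNoMatch$)|is precisely the standard formalization of that remark, so your approach matches the paper's intent.
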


Traditionally a process term is considered as successful if it has an unguarded occurrence of success (see \eg \cite{gorla}). This is usually formalised as $ \exists P' \pkt P \equiv \success \mid P' $. Because of free choice, we have to adapt the usual definitions of the reachability of success to deal with arbitrary nestings of choice and parallel composition. To do so we recursively define the notion of unguarded subterms.

\begin{definition}[Unguarded Subterms]
	\label{def:unguardedSubterms}
	Let $ P \in \procPi $ or $ P \in \procPiNoMatch $. The \emph{set of unguarded subterms of} $ P $, denoted by $ \ungSub{P} $, is recursively defined as:
	\vspace*{-0.6em}
	\begin{align*}
  		\begin{cases}
  			\Set{ P } \cup \ungSub{Q} & \text{, if } P = \match{a}{a}Q\\
  			\Set{ P } \cup \ungSub{Q_1} \cup \ungSub{Q_2} & \text{, if } P = Q_1 + Q_2 \vee P = Q_1 \mid Q_2\\
  			\Set{ P } \cup \ungSub{Q} & \text{, if } P = \Res{z}{Q} \vee P = {!}Q\\
  			\Set{ P } & \text{, otherwise}
  		\end{cases}
	\end{align*}
	\vspace*{-1.1em}
\end{definition}

\noindent
Note that the sets of unguarded subterms can differ for structural congruent terms. Consider for example $ \ungSub{\Res{z}{\piOutput{z}{z}{\nullTerm}}} = \Set{ \Res{z}{\piOutput{z}{z}{\nullTerm}}, \piOutput{z}{z}{\nullTerm} } $ but $ \ungSub{\Res{z'}{\piOutput{z'}{z'}{\nullTerm}}} = \Set{ \Res{z'}{\piOutput{z'}{z'}{\nullTerm}}, \piOutput{z'}{z'}{\nullTerm} } $
or $ \ungSub{\success} = \Set{ \success } $ but $ \ungSub{\success + \nullTerm} = \Set{ \success + \nullTerm, \success, \nullTerm } $.
Similarly, injective substitutions do not distribute over unguarded subterms.
For example $ \Set{ \subst{y}{x} }\!\left( \ungSub{\Res{x}{\piOutput{x}{x}{\nullTerm}}} \right) = \Set{ \Res{x}{\piOutput{x}{x}{\nullTerm}}, \piOutput{y}{y}{\nullTerm} } $ but $ \Set{ \subst{y}{x} }\!\left( \ungSub{\Res{z}{\piOutput{z}{z}{\nullTerm}}} \right) = \Set{ \Res{z}{\piOutput{z}{z}{\nullTerm}}, \piOutput{z}{z}{\nullTerm} } $.
Moreover note that if $ P' $ is an unguarded subterm of $ P $ then also all unguarded subterms of $ P' $ are unguarded subterms of $ P $.

From Definition~\ref{def:unguardedSubterms} we conclude that if $ P' $ is an unguarded subterm of $ P $ then also all unguarded subterms of $ P' $ are unguarded subterms of $ P $.

\begin{lemma}
	\label{prop:subsetUngSub}
	Let $ P $ and $ P' $ be processes, \ie either $ P, P' \in \procPi $ or $ P, P' \in \procPiNoMatch $. Then $ P' \in \ungSub{P} $ implies $ \ungSub{P'} \subseteq \ungSub{P} $.
\end{lemma}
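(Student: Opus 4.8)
The plan is to prove the statement by structural induction on the process $P$, following exactly the case split used in Definition~\ref{def:unguardedSubterms}. In each case I will show that every $P' \in \ungSub{P}$ satisfies $\ungSub{P'} \subseteq \ungSub{P}$, invoking the induction hypothesis on the immediate subterms that appear in the recursive clause defining $\ungSub{P}$. Since the definition is by cases on the outermost operator, the same case analysis drives both the definition and the induction, which keeps the bookkeeping clean.

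For the base case, $P$ falls under the ``otherwise'' clause, \ie $P$ is one of $\nullTerm$, $\success$, $\tau.Q$, $\piInput{x}{z}{Q}$, $\piOutput{x}{y}{Q}$, or a \emph{mismatched} prefix $\match{a}{b}Q$ with $a \neq b$. Then $\ungSub{P} = \Set{P}$, so the only candidate is $P' = P$, for which $\ungSub{P'} = \ungSub{P}$ and the inclusion holds trivially. For the inductive step I treat the remaining clauses uniformly. If $P = \match{a}{a}Q$, or $P = \Res{z}{Q}$, or $P = {!}Q$, then $\ungSub{P} = \Set{P} \cup \ungSub{Q}$; and if $P = Q_1 + Q_2$ or $P = Q_1 \mid Q_2$, then $\ungSub{P} = \Set{P} \cup \ungSub{Q_1} \cup \ungSub{Q_2}$. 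Given $P' \in \ungSub{P}$, either $P' = P$, in which case $\ungSub{P'} = \ungSub{P}$ at once, or $P'$ lies in $\ungSub{Q}$ (respectively in some $\ungSub{Q_i}$). In that case the induction hypothesis gives $\ungSub{P'} \subseteq \ungSub{Q}$ (respectively $\ungSub{P'} \subseteq \ungSub{Q_i}$), and the defining clause for $\ungSub{P}$ supplies the further inclusion $\ungSub{Q} \subseteq \ungSub{P}$ (respectively $\ungSub{Q_i} \subseteq \ungSub{P}$), so the two combine to yield $\ungSub{P'} \subseteq \ungSub{P}$.

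The argument is routine and I expect no genuine obstacle. The only point that needs care is the case distinction for the match prefix: $\match{a}{a}Q$ contributes $\ungSub{Q}$ and hence requires the induction hypothesis, whereas a mismatched $\match{a}{b}Q$ with $a \neq b$ belongs to the base case with $\ungSub{P} = \Set{P}$. Keeping these two apart, and ensuring that every immediate subterm occurring in a recursive clause is granted the induction hypothesis, is all that the proof demands.
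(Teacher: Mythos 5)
Your proof is correct: the structural induction follows exactly the case split of Definition~\ref{def:unguardedSubterms}, the induction hypothesis is applied only to the immediate subterms appearing in the recursive clauses, and the mismatched prefix $\match{a}{b}Q$ with $a \neq b$ is correctly placed in the ``otherwise'' case. The paper gives no explicit proof at all|it states the lemma as an immediate consequence of the definition|so your argument is just the routine formalisation of exactly that reading, not a different route.
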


Then a term is successful if it has an unguarded occurrence of success.

\begin{definition}[Reachability of Success]
	Let $ P \in \procPi $ or $ P \in \procPiNoMatch $. Then $ P $ is \emph{successful}, denoted by $ P \hasS $, if $ \success \in \ungSub{P} $.
	\emph{$ P $ reaches success}, denoted by $ P \reachS $, if there is some $ Q $ such that $ P \steps Q $ and $ Q \hasS $.
	Moreover, we write $ P \mustReachSuccessFinite $, if $ P $ reaches success in every finite maximal execution.
	Let $ P \nHasS $ abbreviate $ \neg \left( P \hasS \right) $, $ P \nReachS $ abbreviate $ \neg \left( P \reachS \right) $, and $ P \not\mustReachSuccessFinite $ abbreviate $ \neg \left( P \mustReachSuccessFinite \right) $.
\end{definition}

\noindent
Of course, all proofs in this paper hold similarly for variants of \piT and \piNM with only guarded choice and the traditional definition of a successful term.

The first quality criterion to compare process calculi presented in Section~\ref{sec:quality} is compositionality. It induces the definition of a \piNM-context parametrised on a set of names for each operator of \piT. A \piNM-context $ \context{C}{\hole_1, \ldots, \hole_n} : (\procPiNoMatch)^n \to \procPiNoMatch $ is simply a \piNM-term with $ n $ holes. Putting some \piNM-terms $ P_1, \ldots, P_n $ in this order into the holes $ \hole_1, \ldots, \hole_n $ of the context, respectively, gives a term denoted by $ \context{C}{P_1, \ldots, P_n} $. Note that a context may bind some free names of $ P_1, \ldots, P_n $. The arity of a context is the number of its holes. We extend the definition of unguarded subterms by the equation $ \ungSub{\hole} = \Set{ \hole } $ to deal with contexts.

The standard notion of equivalence to compare terms of the \piCal is bisimulation. An introduction to bisimulations in the \piCal can be found \eg in \cite{milnerParrowWalker92} or \cite{sang}. For our separation result we require such a standard version of reduction bisimulation, denoted by $ \asymp $, on the target language, \ie on \piNM-terms.


\section{Quality of Encodings}
\label{sec:quality}

Within this paper we analyse the existence of an encoding from \piT into \piNM. To measure the quality of such an encoding, Gorla \cite{gorla} suggested five criteria well suited for language comparison. Accordingly, we consider an encoding to be ``valid'', if it satisfies Gorla's five criteria.

We call the tuple $ \mathcal{L} = \left( \mathcal{P}, \step \right) $, where $ \mathcal{P} $ is a set of language terms and $ \step $ is a reduction semantics, a \emph{language}.
An \emph{encoding} from $ \mathcal{L}_1 = \left( \mathcal{P}_1, \step_1 \right) $ into $ \mathcal{L}_2 = \left( \mathcal{P}_2, \step_2 \right) $ is then a tuple $(\enco{\cdot}, \vap, \asymp )$ such that
\begin{compactitem}
	\item $ \enco{\cdot} : \mathcal{P}_1 \rightarrow \mathcal{P}_2 $ is the translating function,
	\item $ \vap : \mc \rightarrow \mck $ is a renaming policy, where $ \vap(u) \cap \vap(v) = \emptyset$ for all $ u \neq v $,
	\item and $ \asymp $ is a behaviour equivalence on $ \mathcal{L}_2 $.
\end{compactitem}
We call $ \lang_1 $ the \emph{source language (calculus)} and $ \lang_2 $ the \emph{target language (calculus)}. Accordingly we call the elements of $ \proc_1 $ \emph{source terms} and the elements of $ \proc_2 $ \emph{target terms}.
We use $ S, S', S_1, \ldots $ ($ T, T', T_1, \ldots $) to range over source (target) terms.

The main ingredient of an encoding is of course the encoding function $ \enco{\cdot} $ that is a mapping from processes to processes.
However, sometimes it is useful to be able to reserve some names to play a special role in an encoding.
Since most process calculi have infinitely many names in their alphabet, it suffices to shift the set of names $ \Set{ x_0, x_1, \ldots } $ of the target language to the set $ \Set{ x_n, x_{n+1}, \ldots } $ to reserve $ n $ names. In order to incorporate such ``shifts'' and similar techniques, Gorla introduces a renaming policy $ \vap $, \ie mapping from names to names that specifies the translation of each name of the source language into a name or vector of names of the target language.
Additionally we assume the existence of a behavioural equivalence $ \asymp $ on the target language that is a reduction bisimulation. Its purpose is to describe the abstract behaviour of a target process, where abstract basically means with respect to the behaviour of the source term. Therefore it should abstract from ``junk'' left over by the encoding.

\cite{gorla} requires $ \vap $ to map all names to a vector of the same length since this way names are treated uniformly, \ie source names cannot be handled differently by an encoding just because the length of the vector, to that $ \vap $ maps to, is different.
Consequently all the names in the vector generated by $ \vap $ for a source term name should be pairwise different.

\begin{obs}
	\label{obs:renamingPolicyVector}
	Let \encod be a valid encoding from \piT into \piNM. Then:
	\begin{align*}
		\forall x \in \mc \pkt \forall i, j \in \Set{ 1, \ldots, \length{\vap\!\left( x \right)} } \pkt \proj{\vap\!\left( x \right)}{i} = \proj{\vap\!\left( x \right)}{j} \iff i = j
	\end{align*}
\end{obs}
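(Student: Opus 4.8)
The equivalence splits into two implications. The direction $i = j \impl \proj{\vap(x)}{i} = \proj{\vap(x)}{j}$ is immediate, as projecting a fixed vector at a fixed index yields a single name. All the content lies in the converse, \ie in showing that the $\length{\vap(x)}$ components of $\vap(x)$ are \emph{pairwise distinct}, and I would prove this contrapositively: assuming $\proj{\vap(x)}{i} = \proj{\vap(x)}{j}$ for some $i \neq j$, I would derive a contradiction with the requirements on a valid encoding.

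The main tool is the name invariance criterion, one of Gorla's conditions for a valid encoding. For every source substitution $\sigma$ it supplies an induced substitution $\sigma'$ on target names that is well defined and satisfies $\sigma'(\vap(a)) = \vap(\sigma(a))$ for all $a \in \mc$. First I would fix an arbitrary $y \neq x$ and take $\sigma = \{ \subst{y}{x}, \subst{x}{y} \}$, the injective swap of $x$ and $y$, so that $\sigma'(\vap(x)) = \vap(\sigma(x)) = \vap(y)$. Reading this vector equation coordinatewise at the indices $i$ and $j$ gives $\sigma'(\proj{\vap(x)}{i}) = \proj{\vap(y)}{i}$ and $\sigma'(\proj{\vap(x)}{j}) = \proj{\vap(y)}{j}$. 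Since the two arguments on the left coincide by assumption and $\sigma'$ is a well-defined map on names, the right-hand sides must agree, \ie $\proj{\vap(y)}{i} = \proj{\vap(y)}{j}$. As $y$ was arbitrary, the collision at the index pair $(i,j)$ is \emph{uniform}: every vector $\vap(a)$ repeats the same name at positions $i$ and $j$.

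It remains to turn this uniformity into a contradiction, which I expect to be the crux. My plan is to read $\vap$ as Gorla intends: it reserves for each source name a set of $\length{\vap(x)} = k$ target names, with distinct source names reserving disjoint sets, since $\vap(u) \cap \vap(v) = \emptyset$ for $u \neq v$. A uniform collision at $(i,j)$ means position $j$ never contributes a name beyond position $i$, so each $\vap(a)$ in truth determines only $k-1$ distinct reserved names; this conflicts with the uniform fixed-length requirement that motivates mapping every name to a length-$k$ vector in the first place, namely with $\length{\vap(x)}$ genuinely counting the reserved names of $x$. Hence no within-vector collision can occur and the components are pairwise distinct. The earlier steps are mechanical, but this final inference cannot be drawn from disjointness alone: ruling out a \emph{globally consistent} collision forces one to appeal explicitly to the intended meaning of the renaming policy (equivalently, that the renamings carried out by the individual coordinates have pairwise disjoint ranges), so a fully rigorous treatment might instead build this distinctness into the definition of a valid renaming policy and keep the present statement as the corresponding sanity check.
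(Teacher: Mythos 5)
The paper gives no proof of this statement at all: it is an \emph{observation}, and its only justification is the informal paragraph preceding it, namely that Gorla requires $\vap$ to map every name to a vector of the same length so that names are treated uniformly, and that ``consequently'' the entries of each vector should be pairwise different. Your final diagnosis therefore coincides with the paper's own treatment: within-vector distinctness is not derived from the formal requirements but is read into the intended meaning of a renaming policy. Along the way you prove something the paper does not record, and it is correct: by name invariance applied to the injective substitution swapping $x$ with an arbitrary $y$, the induced $\sigma'$ satisfies $\sigma'\!\left( \vap\!\left( x \right) \right) = \vap\!\left( y \right)$ componentwise, so a collision $\proj{\vap\!\left( x \right)}{i} = \proj{\vap\!\left( x \right)}{j}$ with $i \neq j$ forces $\proj{\vap\!\left( y \right)}{i} = \proj{\vap\!\left( y \right)}{j}$ for every $y$; any violation would thus be uniform across all names. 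Your suspicion that the argument cannot be closed from the stated conditions alone is also well founded: a policy such as $\vap\!\left( x \right) = \left( a_x, a_x \right)$ with the $a_x$ pairwise distinct satisfies the disjointness requirement $\vap(u) \cap \vap(v) = \emptyset$ for $u \neq v$ and admits the induced substitutions demanded by name invariance (map $a_x$ to $a_{\sigma(x)}$), so neither of these conditions, nor the remaining semantic criteria, forbids repeated entries. Since the paper uses this observation as a lemma inside its impossibility proof (in Lemma~\ref{lem:matchReqCom}), it must hold for \emph{hypothetical} valid encodings and hence can only be secured by stipulation, which is exactly what the paper's ``consequently \ldots should be pairwise different'' does. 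In short: your attempt is not missing an idea that the paper possesses; it makes explicit, more honestly than the paper, that this is a definitional convention on renaming policies together with a correct formal sanity check (the uniform-collision propagation) that the paper omits.
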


The condition that $\vap(u) \cap \vap(v) = \emptyset$ for all $u \neq v$ ensures that the renaming policy does not relate unrelated source term names.

The five criteria that are proposed in \cite{gorla} are divided into two structural and three semantic criteria. The structural criteria comprise (1) \emph{compositionality} and (2) \emph{name invariance}. The semantic criteria comprise (3) \emph{operational correspondence}, (4) \emph{divergence reflection} and (5) \emph{success sensitiveness}.

Intuitively, an encoding is compositional if the translation of an operator is similar for all its parameters. To mediate between the translations of the parameters the encoding defines a unique context for each operator, whose arity is the arity of the operator. Moreover, the context can be parametrised on the free names of the corresponding source term. Note that our result is independent of this parametrisation.

\bkrit[Compositionality]
	\label{comp}
	A translation $\encod$ from $\mathcal{L}_1$ into $\mathcal{L}_2$ is \emph{compositional}, if for each $k$-ary operator $\mathtt{op}$ of $\mathcal{L}_1$ and for each subset of names $N$ there is a $k$-ary context $ \contextOP{C}{N}{\mathtt{op}}{\hole_1, \dots, \hole_k} $, such that for all $S_1, \dots, S_k$ with $\freeNames{S_1, \dots, S_k} = N$ it holds that $ \enco{\mathtt{op}(S_1, \dots, S_k)} = \contextOP{C}{N}{\mathtt{op}}{\enco{S_1}, \ldots, \enco{S_k}} $.
\ekrit

The second structural criterion of Gorla states that the encoding should not depend on specific names used in the source term. Therefore it describes how encoding functions have to deal with substitutions.

\bkrit[Name Invariance]
	\label{ninv}
	A translation $\encod$ from $\mathcal{L}_1$ into $\mathcal{L}_2$ is \emph{name invariant}, if for each $S$ and $\sigma$ it holds that 
$$\enco{\sigma(S)} \left\{\begin{array}{ll}= \sigma'\left(\enco{S}\right)&\mbox{ if }\sigma\mbox{ is injective}\\\asymp \sigma'\left(\enco{S}\right)&\mbox{ otherwise}\end{array}\right.$$ 
where $\sigma'$ is such that $ \vap(\sigma(a)) = \sigma'(\vap(a)) $ for every $a \in \mc $.
\ekrit

$ \sigma' $ can be considered as the translation of $ \sigma $. The condition $ \vap(\sigma(a)) = \sigma'(\vap(a)) $ ensures that $ \vap $ introduces no additional renamings between (parts of) translations of source term names. Of course $ \sigma' $ cannot affect reserved names, \ie for all names $ x $ in the domain of $ \sigma' $ there is a source term name $ a $ such that $ x \in \vap\!\left( a \right) $.

The first semantic criterion is operational correspondence, which consists of a soundness and a completeness condition.
\emph{Completeness} requires that every execution of a source term can be simulated by its translation, \ie the translation does not omit any executions of the source term. \emph{Soundness} requires that every execution of a target term corresponds to some execution of the corresponding source term, \ie the translation does not introduce new executions.

\bkrit[Operational Correspondence]
	\label{opco}
	A $\encod$ from $\mathcal{L}_1$ into $\mathcal{L}_2$ is \emph{operationally corresponding}, if it is\\
	\begin{tabular}{l@{\;\;}l}
		\emph{Complete:} & for all $ S \steps S' $, it holds that $ \enco{S} \steps \asymp \enco{S'} $;\\
		\emph{Sound:} & for all $ \enco{S} \steps T $, there exists an $ S' $\\
		& such that $ S \steps S' $ and $ T \steps \asymp \enco{S'} $.
	\end{tabular}
\ekrit

\noindent
Note that the Definition of operational correspondence relies on the equivalence $ \asymp $ to get rid of junk possibly left over within executions of target terms. Sometimes, we refer to the completeness criterion of operational correspondence as operational completeness and, accordingly, for the soundness criterion as operational soundness.

The next criterion deals with infinite computations. It states that an encoding should not introduce divergent executions.

\bkrit[Divergence Reflection]
	\label{divre}
	A translation $\encod$ from $\mathcal{L}_1$ into $\mathcal{L}_2$ \emph{reflects divergence}, if for every $S$ with $ \enco{S} \step^{\omega} $, it holds that $ S \step^{\omega} $.
\ekrit

The last criterion links the behaviour of source terms to the behaviour of target terms. With \cite{gorla}, we assume a \emph{success} operator $ \success $ to be part of the syntax of both the source and the target language.
Since $ \success $ cannot be further reduced, the operational semantics is left unchanged. Moreover, note that $ \names{\success} = \freeNames{\success} = \boundNames{\success} = \emptyset $, so also interplay of $\success$ with the $\equiv$-rules is smooth and does not require explicit treatment.
An encoding respects the behaviour of the source term if it and its translation answer the tests for success in exactly the same way.

\bkrit[Success Sensitiveness]
	\label{succse}
	A translation $\encod$ from $ \mathcal{L}_1 $ into $ \mathcal{L}_2 $ is \emph{success sensitive}, if for every $S$, it holds $ S \reachS $ iff $ \enco{S} \reachS $.
\ekrit

If an encoding satisfies all five criteria we call it valid.

\begin{definition}[Valid Encoding]
	An encoding from $ \lang_1 = \left( \mathcal{P}_1, \step_1 \right) $ into $ \lang_2 = \left( \mathcal{P}_2, \step_2 \right) $ is \emph{valid} if it satisfies compositionality, name invariance, operational correspondence, divergence reflection, and success sensitiveness.
\end{definition}

Success sensitiveness only links the behaviours of source terms and their literal translations but not of their continuations. To do so, Gorla relates success sensitiveness and operational correspondence by requiring that $ \asymp $ never relates two processes that differ in the possibility to reach success. More precisely $ \asymp $ \emph{respects success} if, for every $ P $ and $ Q $ with $ P \reachS $ and $ Q \nReachS $, it holds that $ P \not\asymp Q $.
By \cite{gorla} a ``good'' equivalence $ \asymp $ is often defined in the form of a barbed equivalence (as described e.g. in \cite{milnerSangiorgi92}) or can be derived directly from the reduction semantics and is often a congruence, at least with respect to parallel composition. For the separation results presented in this paper, we require only that $ \asymp $ is a success respecting reduction bisimulation, \ie for every $ T_1, T_2 \in \proc_2 $ such that $ T_1 \asymp T_2 $, $ T_1 \reachS $ iff $ T_2 \reachS $ and for all $ T_1 \steps_2 T_1' $ there exists a $ T_2' $ such that $ T_2 \steps_2 T_2' $ and $ T_1' \asymp T_2' $.

The following two Lemmata are proved in \cite{petphd}. Their proofs do not rely on on specific source or target languages and thus hold also in the present case. For simplicity and since the proofs are short, we repeat them adapted to the notions of this paper. The first shows that the bisimulation $ \asymp $ on target terms respects also the ability to reach success in all finite maximal executions.

\begin{lemma}
	\label{lem:mustSuccessRespecting}
	Let \encod be a valid encoding from \piT into \piNM.
	Let $ T_1, T_2 \in \procPiNoMatch $ such that $ T_1 \asymp T_2 $.
	Then:
	\begin{align*}
		T_1 \mustReachSuccessFinite \iff T_2 \mustReachSuccessFinite
	\end{align*}
\end{lemma}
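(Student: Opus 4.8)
The plan is to prove both implications at once by exploiting that $\asymp$ is symmetric: it suffices to show $T_1 \mustReachSuccessFinite \implies T_2 \mustReachSuccessFinite$, the converse following by exchanging the roles of $T_1$ and $T_2$. I would phrase this contrapositively: starting from a finite maximal execution of $T_2$ that never becomes successful, I build a finite maximal execution of $T_1$ that never becomes successful, contradicting $T_1 \mustReachSuccessFinite$. The only fact about $\mustReachSuccessFinite$ I need is its unfolding: a term $T$ violates $\mustReachSuccessFinite$ exactly when there is a finite maximal execution $T = P_0 \step \cdots \step P_n$ with $P_n \noStep$ and $P_i \nHasS$ for every $i$; call such a run \emph{success-free}. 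The single crucial observation at the endpoint is that a deadlock is successful iff it reaches success: if $P_n \noStep$, then $P_n \reachS$ holds iff $P_n \hasS$, because $P_n$ has no proper reducts.

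First I would treat the endpoint. Given a success-free finite maximal execution $T_2 = V_0 \step \cdots \step V_m \noStep$, its last term satisfies $V_m \noStep$ and $V_m \nHasS$, hence $V_m \nReachS$. Applying the reduction-bisimulation clause to $T_2 \steps V_m$ yields a $U$ with $T_1 \steps U$ and $U \asymp V_m$; since $\asymp$ respects success and $V_m \nReachS$, I obtain $U \nReachS$. Consequently every term reachable from $U$ is non-successful (otherwise $U$ would reach success), so any finite maximal execution issuing from $U$ is automatically success-free, and prefixed by $T_1 \steps U$ it extends to a finite maximal execution of $T_1$. To control the whole run rather than only its tail, I would match the entire $T_2$-execution by induction on its length $m$: the case $m = 0$ is the deadlock argument above, and in the inductive step I match the first reduction $V_0 \step V_1$, obtain $U_1$ with $T_1 \steps U_1$ and $U_1 \asymp V_1$, and apply the induction hypothesis to the shorter success-free run starting at $V_1$.

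The heart of the argument, and the step I expect to be the main obstacle, is making the assembled $T_1$-execution success-free \emph{as a whole}, not merely on the suffix emanating from $U$. Two things can go wrong. Since $\asymp$ is only guaranteed to preserve $\reachS$, not the state predicate $\hasS$, and since the bisimulation works up to the reflexive-transitive closure $\steps$, the prefix $T_1 \steps U_1$ produced by a match may silently pass through a successful intermediate state that the source run avoids; the induction must therefore pin down non-success precisely at the points where the run is about to deadlock, invoking the collapse $\reachS \Leftrightarrow \hasS$ there rather than trying to transport $\hasS$ pointwise. The genuinely delicate corner is the second danger: a matched term $U_1$ might admit only infinite executions, so that no finite maximal execution issues from it at all, and a purely divergent term vacuously satisfies $\mustReachSuccessFinite$. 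Here I would rely on the remaining properties of a valid encoding, in particular that $\asymp$ is the equivalence attached to a divergence-reflecting encoding, to exclude that a non-successful, deadlocking source run is mirrored by a solely divergent target run. Establishing exactly this transfer of termination, together with pointwise non-success along the matched run, is where the real work lies; the rest is bookkeeping with the unfolding of $\mustReachSuccessFinite$ and the deadlock identity $\reachS \Leftrightarrow \hasS$.
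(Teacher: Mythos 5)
Your opening moves coincide with the paper's entire proof: from a success-free finite maximal execution of $T_2$ you take the deadlocked endpoint $V_m$, observe that $V_m \noStep$ and $V_m \nHasS$ give $V_m \nReachS$, transport it through the bisimulation to some $U$ with $T_1 \steps U$ and $U \asymp V_m$, and conclude $U \nReachS$ because $\asymp$ respects success. The paper stops exactly there. The reason it can stop is that it unfolds $T_1 \mustReachSuccessFinite$ as the state-based property ``every $T_1'$ with $T_1 \steps T_1'$ satisfies $T_1' \reachS$''; this is explicit in its final sentence, where the existence of a reachable $T_1''$ with $T_1'' \nReachS$ is said to ``contradict the assumption that for all $T_1'$ with $T_1 \steps T_1'$ we have $T_1' \reachS$''. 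Under that unfolding the existence of your $U$ is already the desired contradiction: no induction over the run, no control of the prefix, and no transfer of termination is needed.

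Everything in your proposal after the endpoint argument is a genuine gap, and one that cannot be closed. You correctly name the two dangers -- a successful state hidden inside the matched prefix $T_1 \steps U$, and a matched $U$ from which only infinite executions issue -- but you never overcome them, and under your literal, execution-based reading of $\mustReachSuccessFinite$ they are fatal, because a success-respecting reduction bisimulation simply does not transfer the property ``every finite maximal execution contains a successful state''. Two counterexamples, matching your two dangers exactly: (i) the equivalence whose only non-trivial class consists of $\nullTerm$ together with all derivatives of $!\tau.\nullTerm$ is a success-respecting reduction bisimulation (no term in the class reaches success, and every step is matched by idling), yet $!\tau.\nullTerm$ has no finite maximal execution at all and so satisfies $\mustReachSuccessFinite$ vacuously, while $\nullTerm$ does not satisfy it; (ii) the equivalence merging (the $\equiv$-classes of) $\tau.(\success + \tau.\nullTerm)$ and $\tau.(\success + \tau.\nullTerm) + \tau.\nullTerm$ is a success-respecting reduction bisimulation, yet every finite maximal execution of the first term passes through a successful state (one structurally congruent to $\success + \tau.\nullTerm$), whereas the second term has the success-free maximal execution that ends in $\nullTerm$ after one step. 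Your hope that divergence reflection rescues case (i) cannot work: divergence reflection relates a source term $S$ to its own translation $\enco{S}$, whereas $T_1$ and $T_2$ in this lemma are arbitrary $\asymp$-related target terms; neither need be the translation of anything, so that criterion gives no purchase on them. The upshot is that $\mustReachSuccessFinite$ must be read here the way the paper's proof reads it -- every state reachable from the term can still reach success -- and with that reading your first paragraph is already the complete proof; the part you defer as ``the real work'' is not merely hard but impossible from the stated hypotheses, and should be replaced by adopting the state-based unfolding.
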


\begin{proof}
	If $ T_1\mustReachSuccessFinite $ but not $ T_2 \mustReachSuccessFinite $ then, for all $ T_1' \in \procPiNoMatch $ with $ T_1 \steps T_1' $, we have $ T_1' \reachS $ but there exists some $ T_2' \in \procPiNoMatch $ such that $ T_2 \steps T_2' $ and $ T_2' \nReachS $.
	Since $ \asymp $ is a bisimulation, $ T_1 \asymp T_2 $ and $ T_2 \steps T_2' $ imply that there is some $ T_1'' \in \procPi $ such that $ T_1 \steps T_1'' $ and $ T_2' \asymp T_1'' $.
	Because $ \asymp $ respects success, $ T_2' \asymp T_1'' $ and $ T_2' \nReachS $ imply $ T_1'' \nReachS $.
	This violates the requirement that $ T_1 \mustReachSuccessFinite $, \ie contradicts the assumption that for all $ T_1' \in \procPiNoMatch $ with $ T_1 \steps T_1' $ we have $ T_1' \reachS $.
	We conclude that $ T_1\mustReachSuccessFinite $ iff $ T_2\mustReachSuccessFinite $.
\end{proof}

The second states that a valid encoding is also sensitive to the ability to reach success in all finite maximal executions.

\begin{lemma}
	\label{lem:mustSuccessSensitiveness}
	Let \encod be a valid encoding from \piT into \piNM.
	Let $ S \in \procPi $.
	Then:
	\begin{align*}
		S \mustReachSuccessFinite \impl \enco{S} \mustReachSuccessFinite
	\end{align*}
\end{lemma}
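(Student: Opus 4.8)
The plan is to read $\mustReachSuccessFinite$ through the same reachability lens that is already used in the proof of Lemma~\ref{lem:mustSuccessRespecting}: a process $P$ satisfies $P \mustReachSuccessFinite$ precisely when $P' \reachS$ holds for \emph{every} $P'$ with $P \steps P'$. With this reading the conclusion $\enco{S} \mustReachSuccessFinite$ unfolds to the statement that $T \reachS$ for every $T$ with $\enco{S} \steps T$, while the hypothesis $S \mustReachSuccessFinite$ supplies $S' \reachS$ for every $S'$ with $S \steps S'$. The whole lemma thus becomes a transfer of reachability of success from the source term to the target term along operational soundness, and in particular there is no need to single out deadlocked targets $T$ with $T \noStep$.

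Concretely, I would fix an arbitrary $T$ with $\enco{S} \steps T$ and apply operational soundness to this reduction sequence. This yields a source term $S'$ with $S \steps S'$ and some $T'$ with $T \steps T'$ and $T' \asymp \enco{S'}$. From $S \steps S'$ and the hypothesis I get $S' \reachS$; success sensitiveness turns this into $\enco{S'} \reachS$; and since $\asymp$ respects success and $T' \asymp \enco{S'}$, also $T' \reachS$. Finally $T \steps T'$ together with $T' \reachS$ gives $T \reachS$, because a witness $T' \steps Q$ with $Q \hasS$ extends to $T \steps T' \steps Q$. As $T$ was arbitrary, $\enco{S} \mustReachSuccessFinite$ follows. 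One may equally run this contrapositively, mirroring Lemma~\ref{lem:mustSuccessRespecting}: a reachable $T$ with $T \nReachS$ is pushed back through soundness, success sensitiveness and success respect to an $S'$ with $S \steps S'$ and $S' \nReachS$, contradicting $S \mustReachSuccessFinite$.

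Each individual step is a one-liner, so I expect the real care to lie in two bookkeeping points. The first is the reformulation of $\mustReachSuccessFinite$ as a property of all reachable states, which I intend to adopt verbatim from Lemma~\ref{lem:mustSuccessRespecting} rather than re-derive, and which I must use in both directions --- to decode the hypothesis on source terms and to re-encode the goal on target terms. The second, and the point I expect to be most error-prone, is the direction in which the success-respecting bisimilarity $\asymp$ is applied: from $T' \asymp \enco{S'}$ and $\enco{S'} \reachS$ I need $T' \reachS$, so I must invoke that $\asymp$ never relates a success-reaching to a non-success-reaching term, and I must check that the residual reduction $T \steps T'$ delivered by soundness does no harm, which holds because $\reachS$ is preserved under passing to $\steps$-predecessors.
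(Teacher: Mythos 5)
Your proposal is correct and takes essentially the same route as the paper's own proof: apply operational soundness to $\enco{S} \steps T$ to obtain $S'$ and $T'$ with $S \steps S'$ and $T \steps T' \asymp \enco{S'}$, then chain success sensitiveness, the success-respecting property of $\asymp$, and closure of $\reachS$ under passing to $\steps$-predecessors. The paper merely runs the argument contrapositively (from a reachable $T$ with $T \nReachS$ it derives $S'' \nReachS$, contradicting $S \mustReachSuccessFinite$), which is exactly the variant you sketch in your closing remark, and it relies on the same reading of $\mustReachSuccessFinite$ that you adopt from Lemma~\ref{lem:mustSuccessRespecting}.
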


\begin{proof}
	Assume the contrary, \ie there is some $ S \in \procPi $ such that $ S \mustReachSuccessFinite $, but there is some $ T \in \procPiNoMatch $ such that $ \enco{S} \steps T $ and $ T \nReachS $.
	Since $ \enco{\cdot} $ is operationally sound, $ \enco{S} \steps T $ implies that there are $ S'' \in \procPi $ and $ T' \in \procPiNoMatch $ such that $ S \steps S'' $ and $ T \steps T' \asymp \enco{S''} $. Then $ T \nReachS $ and $ T \steps T' $ imply $ T' \nReachS $. Since $ \asymp $ respects success, $ T' \asymp \enco{S''} $ and $ T' \nReachS $ imply $ \enco{S''} \nReachS $. Because of success sensitiveness, then also $ S'' \nReachS $, which contradicts the assumption that $ S\mustReachSuccessFinite $.
\end{proof}


\section{The Match Prefix and the Pi-Calculus}
\label{sec:encodeMatch}

Our separation result strongly rests on the criteria compositionality and success sensitiveness.
We also make use of name invariance.
But, as we claim, name invariance is not crucial for the proof. Name invariance defines how a valid encoding has to deal with substitutions. This is used to simplify the argumentation in our proof as explained below.
The last criterion states that source and target terms are related by their ability to reach success.
If we compare \piT and \piNM we observe a difference with respect to successful terms and substitutions.
In \piT a substitution can change the state of a process from unsuccessful to successful. Consider for example the term $ \match{a}{b}\success $ and a substitution $ \sigma $ such that $ \sigma(a) = \sigma(b) $. The only occurrence of success in $ \match{a}{b}\success $ is guarded by a match prefix and thus $ \left( \match{a}{b}\success \right) \nHasS $. But $ \sigma\!\left( \match{a}{b}\success \right) = \match{\sigma(a)}{\sigma(b)}\success $ and thus $ \sigma\!\left( \match{a}{b}\success \right) \hasS $.
In \piNM, because there is no match prefix, a substitution cannot turn an unsuccessful state into a successful state.

\begin{lemma}
	\label{prop:propequiv}
	Let $ T \in \procPiNoMatch $. Then $ T \hasS \iff \forall \sigma : \mc \to \mc \pkt \sigma\!\left( T \right) \hasS $.
\end{lemma}

\begin{proof}
	Assume $ \forall \sigma : \mc \to \mc \pkt \sigma\!\left( T \right) \hasS $. Then, to show $ T \hasS $, it suffices to choose $ \sigma = \id $.
	The proof of the opposite direction is by induction over the structure of $ T $.
	\begin{description}
		\item[Base Cases:] Neither $ \nullTerm $ nor $ \success $ contain names. Hence if $ T = \nullTerm $ or $ T = \success $ then $ \sigma\!\left( T \right) = T $ for all $ \sigma : \mc \to \mc $.
		\item[Induction Hypotheses:] Let $ P_1, P_2 \in \procPiNoMatch $ such that
			\begin{align*}
				P_1 \hasS \impl \forall \sigma : \mc \to \mc \pkt \sigma\!\left( P_1 \right) \hasS \quad \text{ and } \quad P_2 \hasS \impl \forall \sigma : \mc \to \mc \pkt \sigma\!\left( P_2 \right) \hasS
			\end{align*}
		\item[Induction Steps:] Let $ x, y, z \in \mc $. We have to consider seven cases.
			\begin{description}
				\item[Case $ T = \piInput{x}{z}{P_1} $:] Then $ \success \notin \ungSub{\piInput{x}{z}{P_1}} = \Set{ \piInput{x}{z}{P_1}} $, \ie all occurrences of success in T|if there are any|are guarded, and thus $ T \nHasS $ and $ \sigma\!\left( T \right) \nHasS $ for all $ \sigma $.
				\item[Case $ T = \piOutput{x}{y}{P_1} $:] Similar to the first case.
				\item[Case $ T = \tau.P_1 $:] Similar to the first case.
				\item[Case $ T = P_1 + P_2 $:] Assume $ T \hasS $. Then $ T = P_1 + P_2 $ implies that $ P_1 \hasS $ or $ P_2 \hasS $. By the induction hypotheses then $ T \hasS $ implies either $ \forall \sigma : \mc \to \mc \pkt \sigma\!\left( P_1 \right) \hasS $ or $ \forall \sigma : \mc \to \mc \pkt \sigma\!\left( P_2 \right) \hasS $. Note that $ \sigma\!\left( T \right) = \sigma\!\left( P_1 + P_2 \right) \stackrel{\text{Def. } \sigma}{=} \sigma\!\left( P_1 \right) + \sigma\!\left( P_2 \right) $ for all substitutions $ \sigma : \mc \to \mc $. Thus $ \sigma\!\left( P_1 \right) \hasS $ implies $ \sigma\!\left( T \right) \hasS $, because $ \ungSub{\sigma\!\left( P_1 \right)} \subseteq \ungSub{\sigma\!\left( T \right)} $. Similarly $ \sigma\!\left( P_2 \right) \hasS $ implies $ \sigma\!\left( T \right) \hasS $. In both cases we conclude $ T \hasS \impl \forall \sigma : \mc \to \mc \pkt \sigma\!\left( T \right) \hasS $.
				\item[Case $ T = P_1 \mid P_2 $:] Similar to the case above.
				\item[Case $ T = \Res{z}{P_1} $:] Assume $ T \hasS $. Then $ T = \Res{z}{P_1} $ implies $ P_1 \hasS $. By the first induction hypothesis then $ T \hasS $ implies $ \forall \sigma : \mc \to \mc \pkt \sigma\!\left( P_1 \right) \hasS $. Note that $ \sigma\!\left( T \right) = \sigma\!\left( \Res{z}{P_1} \right) \stackrel{\text{Def. } \sigma}{=} \Res{z'}{\sigma\!\left( \subs{z}{z'}\!\left( P_1 \right) \right)} $ for all substitutions $ \sigma : \mc \to \mc $, where $ z' \notin \left( \names{P_1} \cup \names{\sigma} \right) $. Thus $ \forall \sigma : \mc \to \mc \pkt \sigma\!\left( P_1 \right) \hasS $ implies $ \forall \sigma : \mc \to \mc \pkt \sigma\!\left( T \right) \hasS $, because $ \ungSub{\sigma\!\left( \subs{z'}{z}\!\left( P_1 \right) \right)} \subseteq \ungSub{\sigma\!\left( T \right)} $.
				\item[Case $ T = \; !P_1 $:] Assume $ T \hasS $. Then $ T = \; !P_1 $ implies that $ P_1 \hasS $. By the first induction hypothesis then $ T \hasS $ implies $ \forall \sigma : \mc \to \mc \pkt \sigma\!\left( P_1 \right) \hasS $. Note that $ \sigma\!\left( T \right) = \sigma\!\left( !P_1 \right) \stackrel{\text{Def. } \sigma}{=} \; !\sigma\!\left( P_1 \right) $ for all substitutions $ \sigma : \mc \to \mc $. Thus, as in the fourth case, $ \sigma\!\left( P_1 \right) \hasS $ implies $ \sigma\!\left( T \right) \hasS $ for all substitutions $ \sigma : \mc \to \mc $.
			\end{description}
	\end{description}
\end{proof}

In both calculi substitutions may allow us to reach success by enabling a communication step. To do so it has to unify two free names that are the links of an unguarded input and an unguarded output. In the case of \piNM the enabling of such a new communication step is indeed the only possibility for a substitution to influence the reachability of success.
More precisely, if in \piNM a substitution $ \sigma $ allows to reach success, \ie if $ \sigma\!\left( T \right) \reachS $ but $ T \nReachS $, then there is a derivative of $ T $ in which $ \sigma $ unifies the free link names of an input and an output guard and thus enables a new communication step.
We first consider the case where $ \sigma\!\left( P \right) $ reaches success in a single step.

\begin{lemma}
	\label{lem:newCom}
	Let $ P, P' \in \procPiNoMatch $ and $ \sigma : \mc \to \mc $ such that $ \sigma\!\left( P \right) \step P' $ with $ P' \hasS $ and $ P \nReachS $. Then:
	\begin{align*}
		& \exists P'', P_1, P_2, P_3, P_4 \in \procPiNoMatch \pkt \exists y \in \mc \pkt \exists a, b \in \freeNames{P} \pkt\\
		& \hspace{1em} P \equiv P'' \wedgeL a, b \notin \boundNames{P''} \wedgeL \left( P_1 \mid P_2 \right) \in \ungSub{P''} \wedgeL \piInput{a}{y}{P_3} \in \ungSub{P_1}\\
		& \hspace{1em} \wedge \; \piOutput{b}{y}{P_4} \in \ungSub{P_2} \wedgeL \sigma(a) = \sigma(b) \wedgeL a \neq b
	\end{align*}
\end{lemma}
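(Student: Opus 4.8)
The plan is to classify the single reduction step $\sigma(P) \step P'$ and to show that every possibility except the intended one contradicts $P \nReachS$. First I would record two facts. Since $P \nReachS$, in particular $P \nHasS$, and by (the proof of) Lemma~\ref{prop:propequiv} the predicate $\hasS$ is invariant under substitution on \piNM-terms, so $\sigma(Q) \hasS$ iff $Q \hasS$ for every \piNM-term $Q$; thus the occurrence of $\success$ witnessed by $P' \hasS$ must have been un-guarded by the step itself. Second, by the reduction semantics a single step of a \piNM-term is derived either by \textbf{Tau} or by \textbf{Com} (there is no match rule available in \piNM), and in either case structural congruence together with the closure rules \textbf{Sum}/\textbf{Par}/\textbf{Res} lets me present $\sigma(P)$ in a form that exposes the active redex unguarded.

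Next I would treat the two ``degenerate'' cases and derive contradictions. If the step is a \textbf{Tau} step, its firing $\tau$-prefix is the $\sigma$-image of a $\tau$-prefix already present unguarded in $P$, because substitution renames names but never creates a $\tau$-guard; hence $P \step P_0$ by the corresponding step, and Observation~\ref{obs:subsSteps} gives $\sigma(P_0) \equiv P'$. Then $P' \hasS$ yields $P_0 \hasS$ by substitution-invariance of $\hasS$, so $P \reachS$, contradicting the hypothesis. If the step is a \textbf{Com} step on a channel $c$ whose communicating input and output are the $\sigma$-images of an input on $a$ and an output on $b$ in $P$ with $a = b$, then the very same communication is already enabled in $P$, so again $P \step P_0$ with $\sigma(P_0) \equiv P'$, whence $P_0 \hasS$ and $P \reachS$, a contradiction. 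The substitution reasoning needed here is only that firing a fixed redex commutes with $\sigma$.

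The remaining case is a \textbf{Com} step whose communicating input and output come from prefixes of $P$ with \emph{distinct} subjects $a \neq b$ satisfying $\sigma(a) = \sigma(b) = c$; this is exactly the conclusion. Here I would first argue that $a$ and $b$ must be \emph{free} in $P$: a subject bound in $P$ is $\alpha$-renamed by $\sigma$ to a name outside $\names{\sigma}$ and $\freeNames{P}$, so its image could never coincide with the image of a distinct subject; hence merging two distinct subjects onto one channel forces both to be free names unified by $\sigma$, giving $a, b \in \freeNames{P}$ with $\sigma(a) = \sigma(b)$ and $a \neq b$. I would then use structural congruence to bring the two guards into a common unguarded parallel composition, choosing the representative $P''$ by $\alpha$-conversion so that $a, b \notin \boundNames{P''}$; pushing restrictions outwards via $\Res*{z}{P_1 \mid P_2} \equiv P_1 \mid \Res{z}{P_2}$ and $\alpha$-renaming the bound object of the input to the transmitted object $y$ yields $P_1 \mid P_2 \in \ungSub{P''}$ with $\piInput{a}{y}{P_3} \in \ungSub{P_1}$ and $\piOutput{b}{y}{P_4} \in \ungSub{P_2}$, as required.

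I expect the main obstacle to be precisely this last bookkeeping: tracking how $\sigma$ interacts with $\equiv$ and $\alpha$-conversion so as to (i) justify that two distinct communicating subjects must be free names identified by $\sigma$, and (ii) rearrange the scopes of $P$ into a congruent $P''$ that simultaneously keeps the input/output redex unguarded, places $a$ and $b$ outside $\boundNames{P''}$, and aligns the object names. By contrast, the contradictions in the \textbf{Tau} and equal-subject \textbf{Com} cases are the easy part, resting only on substitution-invariance of $\hasS$ and on the fact that those redexes already exist in $P$.
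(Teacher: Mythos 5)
Your proposal is correct and follows essentially the same route as the paper's proof: a case analysis on whether the step is derived by \textbf{Tau} or \textbf{Com}, contradictions with $P \nReachS$ in the \textbf{Tau} case and in the \textbf{Com} case where the redex already exists in $P$ (same or restricted subject), using the substitution-invariance of $\hasS$ from Lemma~\ref{prop:propequiv}, and in the remaining case the conclusion that the two subjects are distinct free names unified by $\sigma$, with structural congruence and alpha-conversion arranging $P''$ as required. The only difference is presentational: you split on whether the source subjects are equal or distinct, whereas the paper splits on whether the communication channel is restricted or free in $\sigma(P)$; these case analyses cover the same ground.
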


\begin{proof}
	By the reduction semantics in Figure~\ref{fig:reductionSemantics}, $ \sigma\!\left( P \right) \step P' $ either results from the Rule~\textbf{Com} or the Rule~\textbf{Tau}. In the second case, because of the Rules~\textbf{Sum}, \textbf{Par}, \textbf{Res}, and \textbf{Cong}, we have $ \tau.\sigma\!\left( Q \right) \in \ungSub{\sigma\!\left( P \right)} $ for some $ Q \in \procPiNoMatch $ and $ P' \equiv \sigma\!\left( Q \right) $. But then $ P \step Q $ and, because $ P' \hasS $ implies $ Q \hasS $ (see Lemma~\ref{prop:propequiv}), we have $ P \reachS $. This contradicts our assumptions.
	
	If the step $ \sigma\!\left( P \right) \step P' $ results from the Rule~\textbf{Com} then, because of the Rules~\textbf{Sum}, \textbf{Par}, \textbf{Res}, and \textbf{Cong}, there are $ P'', P_1, P_2, Q_1, Q_2 \in \procPiNoMatch $ and $ x, y \in \mc $ such that $ \sigma\!\left( P \right) \equiv \sigma\!\left( P'' \right) $, $ \left( \sigma\!\left( P_1 \right) \mid \sigma\!\left( P_2 \right) \right) \in \ungSub{\sigma\!\left( P'' \right)} $, $ \sigma\!\left( P_1 \right) \labeledStep{\piIn{x}{y}} Q_1 $, $ \sigma\!\left( P_2 \right) \labeledStep{\piOut{x}{y}} Q_2 $, and $ y \notin \names{\sigma} $.
	Accordingly, also $ P \equiv P'' $ and $ \left( P_1 \mid P_2 \right) \in \ungSub{P''} $.
	Without loss of generality assume $ \boundNames{P''} \cap \freeNames{P''} = \emptyset $.
	Note that the labeled steps have to result from the Rules~\textbf{Input} and~\textbf{Output}, and arbitrary many applications of \textbf{Sum-l}, \textbf{Par-l}, and \textbf{Cong-l} but no other rule. Hence there are $ R_1, R_2 \in \procPiNoMatch $ such that $ \piInput{x}{y}{\sigma\!\left( R_1 \right)} \in \ungSub{\sigma\!\left( P_1 \right)} $ and $ \piOutput{x}{y}{\sigma\!\left( R_2 \right)} \in \ungSub{\sigma\!\left( P_2 \right)} $ and we can choose $ P_3 = \sigma\!\left( R_1 \right) $ and $ P_4 = \sigma\!\left( R_2 \right) $.
	Then $ P' \equiv \sigma\!\left( R_1 \right) \mid \sigma\!\left( R_2 \right) = \sigma\!\left( R_1 \mid R_2 \right) $. Because it is not allowed to use the Rule~\textbf{Res} for the labeled steps, the $ x $ in $ \piInput{x}{y}{\sigma\!\left( R_1 \right)} $ is restricted in $ \sigma\!\left( P \right) $ iff the $ x $ in $ \piOutput{x}{y}{\sigma\!\left( R_2 \right)} $ is restricted in $ \sigma\!\left( P \right) $. If both occurrences of $ x $ are restricted in $ \sigma\!\left( P \right) $ then there are corresponding restricted names in $ P $. But then $ P \step R_1 \mid R_2 $ and, because $ P' \hasS $ implies $ \left( R_1 \mid R_2 \right) \hasS $ (see Lemma~\ref{prop:propequiv}), we have $ P \reachS $. This contradicts our assumptions.
	
	Hence these two occurrences of $ x $ are free in $ \sigma\!\left( P \right) $ and there are $ a, b \in \freeNames{P} $ such that $ \sigma\!\left( a \right) = \sigma\!\left( b \right) = x $ and $ a, b \notin \boundNames{P''} $. If $ a = b $ then again $ P \step R_1 \mid R_2 $ and, because $ P' \hasS $ implies $ \left( R_1 \mid R_2 \right) \hasS $ (see Lemma~\ref{prop:propequiv}), we have $ P \reachS $. This contradicts our assumptions.
	
	Otherwise, we have $ a \neq b $ and are done.
\end{proof}

From this, we can show the statement for arbitrarily many steps that are
necessary to reach success. Note that, because of the reduction
Rule~\textbf{Cong}, $ P \equiv\steps P' $ is short for $ P \equiv P' $ or $
P \steps P' $.

\begin{lemma}
	\label{prop:newCom}
	Let $ P \in \procPiNoMatch $ and $ \sigma : \mc \to \mc $ such that $ \sigma\!\left( P \right) \reachS $ but $ P \nReachS $.
	Then:
	\begin{align*}
		& \exists P', P_1, P_2, P_3, P_4 \in \procPiNoMatch \pkt \exists y \in \mc \pkt \exists a, b \in \freeNames{P} \pkt\\
		& \hspace{1em} P \equiv\steps P' \wedgeL a, b \notin \boundNames{P'} \wedgeL \left( P_1 \mid P_2 \right) \in \ungSub{P'} \wedgeL \piInput{a}{y}{P_3} \in \ungSub{P_1}\\
		& \hspace{1em} \wedge \; \piOutput{b}{y}{P_4} \in \ungSub{P_2} \wedgeL \sigma(a) = \sigma(b) \wedgeL a \neq b
	\end{align*}
\end{lemma}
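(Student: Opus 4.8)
The plan is to lift the single-step Lemma~\ref{lem:newCom} to arbitrarily many steps by induction on the length $n$ of a finite execution $\sigma\!\left( P \right) \steps Q$ with $Q \hasS$; such an execution exists because $\sigma\!\left( P \right) \reachS$. If $n = 0$, then $\sigma\!\left( P \right) \hasS$, so Lemma~\ref{prop:propequiv} yields $P \hasS$ and hence $P \reachS$, contradicting $P \nReachS$. Thus $n \geq 1$, and every execution witnessing $\sigma\!\left( P \right) \reachS$ performs at least one step.

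For the inductive step I would write the execution as $\sigma\!\left( P \right) \step R \steps Q$ with $R \steps Q$ of length $n - 1$, and dissect the first reduction $\sigma\!\left( P \right) \step R$ exactly as in the proof of Lemma~\ref{lem:newCom}: by Rules~\textbf{Cong}, \textbf{Par}, \textbf{Res}, and \textbf{Sum} it is either a \textbf{Tau} step or a \textbf{Com} step between two unguarded guards of $\sigma\!\left( P \right)$ that descend, through $\sigma$, from guards of $P$. This yields a dichotomy according to whether the step is already available to $P$ itself. In the \emph{mirrored} case---a \textbf{Tau} step, or a \textbf{Com} step whose communicating subjects stem either from a single restricted name of $P$ or from free names $a, b \in \freeNames{P}$ with $a = b$---the same $\tau$ or communication can be performed by $P$, so there is a $P_0 \in \procPiNoMatch$ with $P \step P_0$ and $R \equiv \sigma\!\left( P_0 \right)$ (this commutation is already exhibited inside the proof of Lemma~\ref{lem:newCom}). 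Then $\sigma\!\left( P_0 \right) \equiv R \steps Q$ gives $\sigma\!\left( P_0 \right) \reachS$ via an execution of length $n - 1$, while $P \step P_0$ and $P \nReachS$ force $P_0 \nReachS$. Applying the induction hypothesis to $P_0$ and $\sigma$ supplies $P_0 \equiv\steps P'$ together with the remaining witnesses; since a reduction never creates free names, $\freeNames{P_0} \subseteq \freeNames{P}$ keeps $a, b$ in $\freeNames{P}$, and prepending $P \step P_0$ turns $P_0 \equiv\steps P'$ into $P \steps P'$, which discharges the goal.

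In the remaining case the first \textbf{Com} step communicates on a channel $x$ that is free in $\sigma\!\left( P \right)$ and originates from two \emph{distinct} free subjects $a \neq b$ of $P$ with $\sigma\!\left( a \right) = \sigma\!\left( b \right) = x$. This is precisely the conclusion with zero reduction steps: the extraction of Lemma~\ref{lem:newCom} supplies a $P''$ with $P \equiv P''$, $a, b \notin \boundNames{P''}$, $\left( P_1 \mid P_2 \right) \in \ungSub{P''}$, $\piInput{a}{y}{P_3} \in \ungSub{P_1}$, and $\piOutput{b}{y}{P_4} \in \ungSub{P_2}$, so that taking $P' = P''$ satisfies $P \equiv\steps P'$ through its $P \equiv P'$ disjunct. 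Note that for $n = 1$ only this case can apply, since a mirrored first step would make $\sigma\!\left( P_0 \right) \hasS$ and hence (again by Lemma~\ref{prop:propequiv}) contradict $P_0 \nReachS$; the induction absorbs this automatically, as the degenerate length-$0$ instances of the hypothesis are vacuous.

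The main obstacle is the first-step decomposition: one must rerun the subject and pre-image analysis of Lemma~\ref{lem:newCom} \emph{without} the hypothesis that the reached state is already successful. There, that hypothesis was exactly what ruled out the \textbf{Tau}, both-restricted, and $a = b$ subcases (each time by deriving the forbidden $P \reachS$); here these subcases are not contradictory and must instead be routed into the mirrored case and consumed by the induction. The delicate points are therefore establishing $R \equiv \sigma\!\left( P_0 \right)$ so that the residual execution genuinely has length $n - 1$, and confirming that the witnesses $a, b$ survive as free names of $P$ across the prepended step; everything else is the same bookkeeping over $\equiv$, \textbf{Par}, \textbf{Res}, and \textbf{Sum} as in the single-step proof.
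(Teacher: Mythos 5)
Your proof is correct, but it takes a genuinely different---and more careful---route than the paper. The paper's own proof is three sentences long: after noting that $\sigma\!\left( P \right) \nHasS$ (via Lemma~\ref{prop:propequiv}), it writes the execution as $\sigma\!\left( P \right) \steps Q_1 \step Q_2$ with $Q_1 \nHasS$ and $Q_2 \hasS$, and invokes Lemma~\ref{lem:newCom} on the single step $Q_1 \step Q_2$. That invocation presupposes that $Q_1$ can be exhibited as $\sigma\!\left( \hat{P} \right)$ for some $\hat{P}$ with $\hat{P} \nReachS$ that is moreover a reduct of $P$; otherwise the witnesses delivered by Lemma~\ref{lem:newCom} concern $\hat{P}$ and cannot be transported to the required conclusion $P \equiv\steps P'$ with $a, b \in \freeNames{P}$. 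This implicit mirroring assumption on the prefix $\sigma\!\left( P \right) \steps Q_1$ is exactly what your induction refuses to take for granted, and rightly so, because it can fail: for $P = \piOutput{a}{v}{\nullTerm} \mid \piInput{b}{z}{\left( \piOutput{c}{w}{\nullTerm} \mid \piInput{d}{z'}{\success} \right)}$ and $\sigma$ unifying $a$ with $b$ and $c$ with $d$, the term $P$ is stuck while $\sigma\!\left( P \right)$ reaches success in two steps; the last step before success communicates on $\sigma\!\left( c \right) = \sigma\!\left( d \right)$, yet no reduct of $P$ ever unguards that input/output pair, so only the \emph{first} step (on $\sigma\!\left( a \right) = \sigma\!\left( b \right)$) yields legitimate witnesses for $P$. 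Your first-step dichotomy---mirrored steps consumed by the induction hypothesis (with the checks $P_0 \nReachS$ and $\freeNames{P_0} \subseteq \freeNames{P}$), versus a $\sigma$-enabled communication between distinct free preimages, which yields the witnesses with $P \equiv P'$---supplies precisely the missing bookkeeping; and your observation that the success hypothesis of Lemma~\ref{lem:newCom} served there only to refute the Tau, both-restricted, and $a = b$ subcases, so that here they must be re-routed into the mirrored case by rerunning the analysis rather than citing the lemma as a black box, is the key point that makes this work. In short: the paper buys brevity by reducing to the last pre-success step at the price of a mirroring assumption it never justifies (and which is false in general), whereas your induction from the first step is longer but actually closes the argument.
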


\begin{proof}
	If $ \sigma\!\left( P \right) \hasS $ then, by Lemma~\ref{prop:propequiv}, also $ P \hasS $. Hence there are $ Q_1, Q_2 \in \procPiNoMatch $ such that $ \sigma\!\left( P \right) \steps Q_1 \step Q_2 $, $ Q_1 \nHasS $, and $ Q_2 \hasS $. By Lemma~\ref{lem:newCom}, then $ Q_1 \step Q_2 $, $ Q_1 \nHasS $, and $ Q_2 \hasS $ imply that there are $ P', P_1, P_2, P_3, P_4, a, b $ and $ y $ as required.
\end{proof}

In the following proofs we often use the term $ \match{a}{b}\success $ or a variant of this term as counterexample. To reason about the encoding of this term we analyse the context $ \contextMatch{a}{b}{N \cup \Set[]{ a, b }}{\hole} $ that is introduced according to compositionality to translate $ \match{a}{b} $. Note that this context is parameterised on $ N \cup \Set{ a, b } $, which is the set of free names of the encoded term. For example in the case of $ \match{a}{b}\success $ the set of free names contains only $ a $ and $ b $, \ie $ N = \emptyset $. Moreover $ a $, $ b $ and the continuation of the match prefix are parameters of this context. First we show that this context cannot reach success on its own, \ie without a term in its hole.

\begin{lemma}
	\label{lem:contextCannotReachSuccess}
	Let \encod be a valid encoding from \piT into \piNM.
	Let $ N \subseteq \mc $ be a finite set of names,
	$ a, b \in \mc $ be names such that $ a \neq b $,
	and $ \contextMatch{a}{b}{N \cup \Set[]{ a, b }}{\hole} $ be the context that is introduced by $ \enco{\cdot} $ to encode the match prefix $ \match{a}{b} $. Then $ \contextMatch{a}{b}{N \cup \Set[]{ a, b }}{\hole} $ cannot reach success on its own, \ie $ \contextMatch{a}{b}{N \cup \Set[]{ a, b }}{\hole} \nReachS $.
\end{lemma}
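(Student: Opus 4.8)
The plan is to argue by contradiction, confronting the context with the encoding of a \emph{blocked} match prefix. First I would assume $ \contextMatch{a}{b}{N \cup \Set[]{ a, b }}{\hole} \reachS $ and then exhibit a non-successful source term whose encoding is built from exactly this context, so that success sensitiveness is violated.

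The key technical ingredient, which I expect to be the main obstacle, is a monotonicity property of hole filling with respect to reachability of success: for every \piNM-context $ \context{C}{\hole} $ and every $ Q \in \procPiNoMatch $,
\begin{align*}
	\context{C}{\hole} \reachS \impl \context{C}{Q} \reachS .
\end{align*}
Since we extended the notion of unguarded subterms by $ \ungSub{\hole} = \Set{ \hole } $, the hole neither contributes an unguarded $ \success $ nor admits a reduction of its own. Hence every reduction of $ \context{C}{\hole} $ is a reduction of the surrounding context that does not consume the hole, and I would prove by induction on the length of the reduction that the same steps can be replayed after filling, \ie $ \context{C}{\hole} \steps \context{D}{\hole} $ implies $ \context{C}{Q} \steps \context{D}{Q} $. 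Moreover, if $ \context{D}{\hole} \hasS $ then the witnessing unguarded $ \success $ sits in the context part and therefore survives hole filling, so $ \context{D}{Q} \hasS $. The delicate cases are those in which a step duplicates or discards the hole under replication, parallel composition, or choice; there I have to check that the reached success is structural in the context part and hence independent of $ Q $.

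To build the witness I would use that $ N \cup \Set[]{ a, b } $ is finite, so there is a source term $ P \in \procPi $ with $ \freeNames{P} = N \cup \Set[]{ a, b } $ (for instance a parallel composition of outputs on these names). Then $ \freeNames{\match{a}{b}P} = N \cup \Set[]{ a, b } $, and compositionality applied to the match prefix yields $ \enco{\match{a}{b}P} = \contextMatch{a}{b}{N \cup \Set[]{ a, b }}{\enco{P}} $. Because $ a \neq b $, the congruence rule $ \match{a}{a}Q \equiv Q $ is inapplicable, so $ \match{a}{b}P \noStep $; and by Definition~\ref{def:unguardedSubterms} we get $ \ungSub{\match{a}{b}P} = \Set{ \match{a}{b}P } $, hence $ \match{a}{b}P \nHasS $ and thus $ \match{a}{b}P \nReachS $. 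Success sensitiveness then gives $ \contextMatch{a}{b}{N \cup \Set[]{ a, b }}{\enco{P}} \nReachS $.

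Finally I would close the argument by instantiating the monotonicity property with $ Q = \enco{P} $: the assumption $ \contextMatch{a}{b}{N \cup \Set[]{ a, b }}{\hole} \reachS $ would force $ \contextMatch{a}{b}{N \cup \Set[]{ a, b }}{\enco{P}} \reachS $, contradicting the previous step. Hence $ \contextMatch{a}{b}{N \cup \Set[]{ a, b }}{\hole} \nReachS $. Note that the argument uses only compositionality and success sensitiveness; neither name invariance nor operational correspondence is required here.
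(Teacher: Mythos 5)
Your proposal is correct and follows essentially the same route as the paper: assume the context reaches success, fill its hole with $ \enco{P} $ for a source term $ P $ whose free names make compositionality yield $ \enco{\match{a}{b}P} = \contextMatch{a}{b}{N \cup \Set[]{ a, b }}{\enco{P}} $, and contradict success sensitiveness since $ \left( \match{a}{b}P \right) \nReachS $ because $ a \neq b $. The hole-filling monotonicity property you single out as the main obstacle is exactly the step the paper uses implicitly|it replays $ \contextMatch{a}{b}{N \cup \Set[]{ a, b }}{\hole} \steps \context{C'}{\hole} $ as $ \contextMatch{a}{b}{N \cup \Set[]{ a, b }}{\enco{P}} \steps \context{C'}{\enco{P}} $ and transfers the unguarded $ \success $ without comment|so spelling it out is added rigor rather than a different argument.
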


\begin{proof}
	Assume the contrary, \ie there is a context $ \context{C'}{\hole} : \procPiNoMatch \to \procPiNoMatch $ such that $ \contextMatch{a}{b}{N \cup \Set[]{ a, b }}{\hole} \steps \context{C'}{\hole} $ and $ \context{C'}{\hole} \hasS $, \ie $ \context{C'}{\hole} $ has an unguarded occurrence of success. Let $ P \in \procPiNoMatch $ such that $ \freeNames{P} = N $. Then $ \enco{\match{a}{b}P} = \contextMatch{a}{b}{N \cup \Set[]{ a, b }}{\enco{P}} \steps \context{C'}{\enco{P}} $ and, because of $ \context{C'}{\hole} \hasS $, we also have $ \context{C'}{\enco{P}} \hasS $. Hence $ \enco{\match{a}{b}P} \reachS $ but $ \left( \match{a}{b}P \right) \nReachS $, because of $ a \neq b $. This contradicts success sensitiveness.
\end{proof}

Moreover the context introduced to encode the match prefix has to ensure that its hole, \ie the respective encoding of the continuation of the match prefix, is initially guarded and cannot be unguarded by the context on its own.

\begin{lemma}
	\label{lem:contextNotUnguardContinuation}
	Let \encod be a valid encoding from \piT into \piNM.
	Let $ N \subseteq \mc $ be an arbitrary finite set of names,
	$ a, b \in \mc $ be arbitrary names such that $ a \neq b $,
	and $ \contextMatch{a}{b}{N \cup \Set[]{ a, b }}{\hole} $ be the context that is introduced by $ \enco{\cdot} $ to encode the match prefix $ \match{a}{b} $. Then $ \contextMatch{a}{b}{N \cup \Set[]{ a, b }}{\hole} $ cannot unguard its hole, \ie $ \contextMatch{a}{b}{N \cup \Set[]{ a, b }}{\hole} \steps \context{C'}{\hole} $ implies $ \hole \notin \ungSub{\context{C'}{\hole}} $ for all $ \context{C'}{\hole} : \procPiNoMatch \to \procPiNoMatch $.
\end{lemma}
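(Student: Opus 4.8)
The plan is to argue by contradiction, exploiting that for $a\neq b$ the source term $\match{a}{b}P$ is completely inert for \emph{every} continuation $P$: since the only rule able to remove a match, $\match{a}{a}P\equiv P$, requires equal names, we have $\ungSub{\match{a}{b}P}=\Set{\match{a}{b}P}$ and $\match{a}{b}P\noStep$, hence $\match{a}{b}P\nReachS$. So if the context could expose its hole, we could place a successful continuation into it and obtain an encoding that reaches success although its source does not, contradicting success sensitiveness. This mirrors the strategy of Lemma~\ref{lem:contextCannotReachSuccess}, but now the successful behaviour has to come \emph{out of the hole} rather than out of the context itself.

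Concretely, assume towards a contradiction that there is a $ \context{C'}{\hole}:\procPiNoMatch\to\procPiNoMatch $ with $ \contextMatch{a}{b}{N\cup\Set[]{a,b}}{\hole}\steps\context{C'}{\hole} $ and $ \hole\in\ungSub{\context{C'}{\hole}} $. First I would fix a source term $ P\in\procPi $ with $ \freeNames{P}=N $ and $ P\hasS $; such a $P$ clearly exists (e.g.\ $\success$ put in parallel with one trivially guarded process per name of $N$). From $ P\hasS $ we get $ P\reachS $, so success sensitiveness yields $ \enco{P}\reachS $, and by compositionality $ \enco{\match{a}{b}P}=\contextMatch{a}{b}{N\cup\Set[]{a,b}}{\enco{P}} $.

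The argument then rests on two steps. Step~(i): transport the context reduction to the instantiated term, i.e.\ $ \contextMatch{a}{b}{N\cup\Set[]{a,b}}{\enco{P}}\steps\context{C'}{\enco{P}} $. This holds because the hole is treated as inert ($\ungSub{\hole}=\Set{\hole}$), so it is never a redex; none of the steps in the derivation $ \contextMatch{a}{b}{N\cup\Set[]{a,b}}{\hole}\steps\context{C'}{\hole} $ touch the hole and they remain available after substituting $\enco{P}$ for it. Step~(ii): show that $ \hole\in\ungSub{\context{C'}{\hole}} $ together with $ \enco{P}\reachS $ forces $ \context{C'}{\enco{P}}\reachS $. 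Plugging $\enco{P}$ into the unguarded occurrence of $\hole$ makes $\enco{P}$ an unguarded subterm of $\context{C'}{\enco{P}}$; the internal reduction sequence witnessing $\enco{P}\reachS$ can then be replayed in place using the rules \textbf{Par}, \textbf{Sum}, \textbf{Res} (and $\equiv$ for replication and for trivial matches), which keep the replayed term in an unguarded position, reaching a derivative in which some $Q$ with $Q\hasS$ sits unguarded. Lemma~\ref{prop:subsetUngSub} then propagates $\success$ up to the whole term. Combining (i) and (ii) gives $ \enco{\match{a}{b}P}\reachS $ while $ \match{a}{b}P\nReachS $, the desired contradiction with success sensitiveness.

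The main obstacle is step~(ii), and precisely the worry that the context $\context{C'}{\hole}$ binds (captures) some free names of $\enco{P}$, since a context is allowed to do so. I would argue this is harmless: the witness for $\enco{P}\reachS$ consists only of reductions internal to $\enco{P}$, and restriction never disables internal reductions, because rule \textbf{Res} lifts every step under a $\nu$; thus surrounding restrictions introduced by $\context{C'}{\hole}$ cannot block the path to success, and the extra interactions that name sharing with the context might enable can only help, never hinder. Hence the replay in (ii) goes through independently of the concrete shape of $\context{C'}{\hole}$, and the proof closes.
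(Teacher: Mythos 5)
Your proposal is correct and follows essentially the same route as the paper's own proof: assume the hole becomes unguarded, plug in a successful continuation with free names $N$ (the paper uses $P \mid \success$, which is just a concrete instance of your choice), transport the context's reduction, and contradict success sensitiveness since $\left( \match{a}{b}P \right) \nReachS$ for $a \neq b$. The only difference is that you spell out the replay/lifting arguments (your steps (i) and (ii), including the harmlessness of name capture by the context) that the paper leaves implicit behind Lemma~\ref{prop:subsetUngSub}.
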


\begin{proof}
	Assume the contrary, \ie assume there is $ \context{C'}{\hole} \in \procPiNoMatch \to \procPiNoMatch $ such that $ \contextMatch{a}{b}{N \cup \Set[]{ a, b }}{\hole} \steps \context{C'}{\hole} $ and $ \hole \in \ungSub{\context{C'}{\hole}} $. Then, by Lemma~\ref{prop:subsetUngSub}, $ \ungSub{\hole} \subseteq \ungSub{\context{C'}{\hole}} $. Let $ P \in \procPi $ be a source term such that $ \freeNames{P} = N $. Then $ \freeNames{P \mid \success} = N $ and $ \enco{\match*{a}{b}{P \mid \success}} = \contextMatch{a}{b}{N \cup \Set[]{ a, b }}{\enco{P \mid \success}} \steps \context{C'}{\enco{P \mid \success}} $ and $ \ungSub{\enco{P \mid \success}} \subseteq \ungSub{\context{C'}{\enco{P \mid \success}}} $. Since $ \left( P \mid \success \right) \reachS $ and by success sensitiveness, we have $ \enco{P \mid \success} \reachS $. Hence $ \enco{\match*{a}{b}{P \mid \success}} \reachS $ but $ \left( \match*{a}{b}{P \mid \success} \right) \nReachS $, because of $ a \neq b $. This contradicts success sensitiveness.
\end{proof}

But as soon as a substitution unifies the names $ a $ and $ b $ within $ \match{a}{b} $ the context has to unguard its hole.

\begin{lemma}
	\label{lem:subContextUnguardsHole}
	Let \encod be a valid encoding from \piT into \piNM.
	Let $ N \subseteq \mc $ be a finite set of names,
	$ a, b \in \mc $ be names such that $ a \neq b $,
	let $ \contextMatch{a}{b}{N \cup \Set[]{ a, b }}{\hole} $ be the context that is introduced by $ \enco{\cdot} $ to encode the match prefix $ \match{a}{b} $,
	and let $ \sigma : \mc \to \mc $ be a substitution such that $ \sigma(a) = \sigma(b) $.
	Then
	\begin{align*}
		\exists \context{C'}{\hole} \in \procPiNoMatch \to \procPiNoMatch \pkt \sigma'\!\left( \contextMatch{a}{b}{N \cup \Set[]{ a, b }}{\hole} \right) \steps \context{C'}{\hole} \wedge \hole \in \ungSub{\context{C'}{\hole}}
	\end{align*}
	where $ \sigma' $ is such that $ \vap\!\left( \sigma\!\left( n \right) \right) = \sigma'\!\left( \vap\!\left( n \right) \right) $ for every $ n \in \mc $.
\end{lemma}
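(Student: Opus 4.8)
The plan is to exhibit a source term on which the match prefix collapses under $\sigma$, transport the resulting reachability of success across the encoding using name invariance and success sensitiveness, and then argue that this success can only be produced by unguarding the hole. I would fix a $\success$-free witness $P_0 \in \procPi$ with $\freeNames{P_0} = N$ (e.g.\ a parallel composition of inputs on the names of $N$, or $\nullTerm$ if $N = \emptyset$) and consider the two source terms $S = \match*{a}{b}{P_0 \mid \success}$ and $S_0 = \match{a}{b}P_0$. Both have free names $N \cup \Set[]{ a, b }$, so by compositionality $\enco{S} = \contextMatch{a}{b}{N \cup \Set[]{ a, b }}{\enco{P_0 \mid \success}}$ and $\enco{S_0} = \contextMatch{a}{b}{N \cup \Set[]{ a, b }}{\enco{P_0}}$ both use exactly the context from the statement. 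Writing $\context{D}{\hole} = \sigma'\!\left( \contextMatch{a}{b}{N \cup \Set[]{ a, b }}{\hole} \right)$ and alpha-converting the bound names of the context away from $\names{\sigma'}$, the substitution $\sigma'$ distributes over the context, so that $\sigma'(\enco{S}) = \context{D}{\sigma'(\enco{P_0 \mid \success})}$ and $\sigma'(\enco{S_0}) = \context{D}{\sigma'(\enco{P_0})}$.

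For the positive transport, note that since $\sigma(a) = \sigma(b)$ the definition of $\ungSub{\cdot}$ descends through the match prefix, so $\sigma(S) = \match*{\sigma(a)}{\sigma(b)}{\sigma(P_0) \mid \success}$ has an unguarded $\success$; hence $\sigma(S) \hasS$ and $\sigma(S) \reachS$. Since $a \neq b$ the substitution $\sigma$ is non-injective, so name invariance gives $\enco{\sigma(S)} \asymp \sigma'(\enco{S})$, success sensitiveness gives $\enco{\sigma(S)} \reachS$, and because $\asymp$ respects success we obtain $\sigma'(\enco{S}) = \context{D}{\sigma'(\enco{P_0 \mid \success})} \reachS$. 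I would then establish the $\sigma'$-analogue of Lemma~\ref{lem:contextCannotReachSuccess}, namely $\context{D}{\hole} \nReachS$: as $P_0$ is $\success$-free, $\sigma(S_0) \equiv \sigma(P_0)$ contains no $\success$ at all and thus $\sigma(S_0) \nReachS$; the same chain (success sensitiveness, name invariance, $\asymp$ respecting success) yields $\context{D}{\sigma'(\enco{P_0})} \nReachS$. Since filling a hole can only add behaviour — every reduction of $\context{D}{\hole}$ with inert $\hole$ is also a reduction of $\context{D}{T}$ and preserves each context-side unguarded $\success$ — we have $\context{D}{\hole} \reachS \impl \context{D}{\sigma'(\enco{P_0})} \reachS$, so by contraposition $\context{D}{\hole} \nReachS$.

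The contradiction then runs as follows. Suppose the hole is never unguarded, i.e.\ every $\context{C'}{\hole}$ with $\context{D}{\hole} \steps \context{C'}{\hole}$ satisfies $\hole \notin \ungSub{\context{C'}{\hole}}$. Then in $\context{D}{\sigma'(\enco{P_0 \mid \success})}$ the filling sits under a guard in every reachable state; that guard belongs to the context and can only be removed by a context-internal step or a context-provided partner, so the filling neither reduces nor communicates and cannot influence which steps fire. Consequently the reachable states of $\context{D}{\sigma'(\enco{P_0 \mid \success})}$ are exactly those of $\context{D}{\hole}$ with the guarded filling plugged in, and any unguarded $\success$ must stem from the context part. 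Hence $\context{D}{\sigma'(\enco{P_0 \mid \success})} \reachS$ forces $\context{D}{\hole} \reachS$, contradicting the previous paragraph. Therefore the hole is unguarded along some execution, which is precisely the claim.

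The hard part will be the last paragraph: making rigorous, through the $\ungSub{\cdot}$ bookkeeping, that the contents of a hole which always remains guarded are wholly inert and invisible, so that reachability of success for the filled context reduces to that of the bare context $\context{D}{\hole}$. This ``guarded hole contributes nothing and enables nothing'' principle is intuitively clear but requires a careful induction on the derivation of the reduction steps, together with the observation that the guard in front of the hole is always supplied by the context itself rather than by the filling.
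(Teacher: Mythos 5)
Your proposal is correct and takes essentially the same route as the paper's own proof, which compares the two source terms $ \match{a}{b}\nullTerm $ and $ \match{a}{b}\success $, transports reachability of success across the encoding via name invariance, success sensitiveness and the fact that $ \asymp $ respects success, and then asserts|without further proof, and this is exactly the ``inert guarded hole'' principle you flag as the hard part|that a never-unguarded hole makes the differently filled copies of $ \sigma'\!\left( \contextMatch{a}{b}{N \cup \Set[]{ a, b }}{\hole} \right) $ agree on reachability of success. If anything, your witnesses $ \match{a}{b}P_0 $ and $ \match*{a}{b}{P_0 \mid \success} $ are slightly more careful than the paper's, since their arguments have free names matching the parameter $ N \cup \Set[]{ a, b } $ for arbitrary $ N $, whereas the paper's two terms literally fit only the case $ N = \emptyset $.
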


\begin{proof}
	Consider the terms $ \match{a}{b}\nullTerm $ and $ \match{a}{b}\success $. By name invariance, $ \enco{\sigma\!\left( \match{a}{b}\nullTerm \right)} \asymp \sigma'\!\left( \enco{\match{a}{b}\nullTerm} \right) = \sigma'\!\left( \contextMatch{a}{b}{N \cup \Set[]{ a, b }}{\enco{\nullTerm}} \right) $ and $ \enco{\sigma\!\left( \match{a}{b}\success \right)} \asymp \sigma'\!\left( \enco{\match{a}{b}\success} \right) = \sigma'\!\left( \contextMatch{a}{b}{N \cup \Set[]{ a, b }}{\enco{\success}} \right) $. If $ \sigma'\!\left( \contextMatch{a}{b}{N \cup \Set[]{ a, b }}{\hole} \right) $ never unguards its hole, \ie if there is no $ \context{C'}{\hole} \in \procPiNoMatch \to \procPiNoMatch $ such that $ \hole \in \ungSub{\context{C'}{\hole}} $ and $ \sigma'\!\left( \contextMatch{a}{b}{N \cup \Set[]{ a, b }}{\hole} \right) \steps \context{C'}{\hole} $, then $ \sigma'\!\left( \contextMatch{a}{b}{N \cup \Set[]{ a, b }}{\enco{\nullTerm}} \right) \reachS $ holds if and only if $ \sigma'\!\left( \contextMatch{a}{b}{N \cup \Set[]{ a, b }}{\enco{\success}} \right) \reachS $. Since $ \asymp $ respects success, then also $ \enco{\sigma\!\left( \match{a}{b}\nullTerm \right)} \reachS $ iff $ \enco{\sigma\!\left( \match{a}{b}\success \right)} \reachS $. But, since $ \sigma\!\left( \match{a}{b}\nullTerm \right) \nReachS $ and $ \sigma\!\left( \match{a}{b}\success \right) \reachS $, this contradicts success sensitiveness.
\end{proof}

Next we combine our knowledge of the context $ \contextMatch{a}{b}{N \cup \Set[]{ a, b }}{\hole} $ and the relationship between substitutions and the reachability of success in \piNM as stated in Lemma~\ref{prop:newCom}. We show that a match prefix $ \match{a}{b} $ has to be translated into two communication partners, \ie an input and an output, on the translations of $ a $ and $ b $. Intuitively such a communication is the only way to simulate the test for equality of names that is performed by $ \match{a}{b} $. Moreover we derive that the respective links of the communication partner have to be free in the context $ \contextMatch{a}{b}{N \cup \Set[]{ a, b }}{\hole} $. Intuitively they have to be free, because otherwise no substitution can unify them. Consider for example the term $ \piOutput{x}{a}{\nullTerm} \mid \piInput{x}{b}{\match{a}{b}\success} $. In order to reach success the term first communicates the name $ a $ on $ x $. This communication leads to a substitution of the name $ b $ by the received value $ a $ in the continuation of the input guarded subterm. Only this substitution allows to unguard the only occurrence of success. Hence, to simulate such a behaviour of source terms, the encoding has to translate match prefixes into communication {partners}|{to} simulate the test for {equality}|{and} the links of these communication partners have to be {free}|{to} allow for substitutions induced by communication steps. Note that name invariance allows us to ignore a surrounding {communication}|{as} the step on link $ x $ in the example $ \piOutput{x}{a}{\nullTerm} \mid \piInput{x}{b}{\match{a}{b}\success} $|{and} to concentrate directly on the induced substitution.

To avoid the use of the criterion name invariance it suffices to show that the encodings of $ \piOutput{x}{a}{\nullTerm} \mid \piInput{x}{b}{\match{a}{b}\success} $ and $ \piOutput{x}{b}{\nullTerm} \mid \piInput{x}{b}{\match{a}{b}\success} $ differ only by a substitution of (parts of) the translations of $ a $ and $ b $\footnote{Unfortunately, because of the formulation of compositionality that allows for the contexts to depend on the free names of the term, this task is technically elaborate.} and that the contexts introduced to encode outputs and inputs cannot lead to success themselves, \ie that the encoding of $ \piOutput{x}{a}{\nullTerm} \mid \piInput{x}{b}{\match{a}{b}\success} $ reaches success iff the encoding of $ \match{a}{b}\success $ is unguarded. This suffices to reconstruct the substitution and the conditions on this substitution that are used in Lemma~\ref{lem:matchReqCom} and to prove Lemma~\ref{prop:newCom} and Lemma~\ref{lem:parContextNotRes} \wrt such substitutions. The remaining proofs remain the same.

\begin{lemma}
	\label{lem:matchReqCom}
	Let \encod be a valid encoding from \piT into \piNM.
	Let $ N \subseteq \mc $ be an arbitrary finite set of names,
	$ a, b \in \mc $ be arbitrary names such that $ a \neq b $,
	and $ \contextMatch{a}{b}{N \cup \Set[]{ a, b }}{\hole} $ be the context that is introduced by $ \enco{\cdot} $ to encode the match prefix $ \match{a}{b} $.
	Then:
	\begin{align*}
		& \exists i \in \Set{ 1, \ldots, \length{\vap\!\left( a \right)} } \pkt\\
		& \hspace{3em} \Big( \exists \context{C'}{\hole}, \context{C_1}{\hole}, \context{C_2}{\hole}, \context{C_3}{\hole}, \context{C_4}{\hole} \in \procPiNoMatch \to \procPiNoMatch \pkt \exists y \in \mc \pkt \exists c, d \in \freeNames{\contextMatch{a}{b}{N \cup \Set[]{ a, b }}{\hole}} \pkt\\
		& \hspace{5em} \contextMatch{a}{b}{N \cup \Set[]{ a, b }}{\hole} \equiv\steps \context{C'}{\hole}
		\wedgeL \left( \context{C_1}{\hole} \mid \context{C_2}{\hole} \right) \in \ungSub{\context{C'}{\hole}}\\
		& \hspace{5em} \wedge \; \piInput{c}{y}{\context{C_3}{\hole}} \in \ungSub{\context{C_1}{\hole}}
		\wedgeL \piOutput{d}{y}{\context{C_4}{\hole}} \in \ungSub{\context{C_2}{\hole}}\\
		& \hspace{5em} \wedge \; \Set{ c, d } = \Set{ \proj{\vap\!\left( a \right)}{i}, \proj{\vap\!\left( b \right)}{i} } \Big)\\
		& \hspace{3em} \wedge \; \Big( \forall \sigma' : \mc \to \mc \pkt \forall \context{C'}{\hole} \in \procPiNoMatch \to \procPiNoMatch \pkt\\
		& \hspace{5em} \left( \sigma'\left( \contextMatch{a}{b}{N \cup \Set[]{ a, b }}{\hole} \right) \steps \context{C'}{\hole} \wedgeL \hole \in \ungSub{\context{C'}{\hole}} \right)\\
		& \hspace{6em} \impl \sigma'\left( \proj{\vap\!\left( a \right)}{i} \right) = \sigma'\left( \proj{\vap\!\left( b \right)}{i} \right)\\
		& \hspace{8em} \wedge \; \text{in } \sigma'\!\left( \contextMatch{a}{b}{N \cup \Set[]{ a, b }}{\hole} \right) \steps \context{C'}{\hole} \text{ there is a step on } \sigma'\left( \proj{\vap\!\left( b \right)}{i} \right) \Big)
	\end{align*}
\end{lemma}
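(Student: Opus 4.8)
The plan is to prove the two conjuncts separately, driving both by the relationship between substitutions and reachability of success captured in Lemma~\ref{prop:newCom}. For the existence part I would instantiate the unification scenario concretely. Let $\sigma : \mc \to \mc$ map $b$ to $a$ and be the identity elsewhere, so $\sigma(a) = \sigma(b) = a$ while $a \neq b$, and let $\sigma'$ be its translation, i.e.\ $\vap(\sigma(n)) = \sigma'(\vap(n))$ for all $n$. First I would compute $\sigma'$ explicitly: using the renaming-policy disjointness $\vap(u) \cap \vap(v) = \emptyset$ for $u \neq v$ together with the remark that $\sigma'$ fixes all reserved names, one obtains that $\sigma'$ maps $\proj{\vap(b)}{j} \mapsto \proj{\vap(a)}{j}$ for every $j$ and is the identity on every other name. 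From this and Observation~\ref{obs:renamingPolicyVector} I would derive the key \emph{injectivity-modulo} fact: for $c \neq d$ one has $\sigma'(c) = \sigma'(d)$ iff $\Set{ c, d } = \Set{ \proj{\vap(a)}{j}, \proj{\vap(b)}{j} }$ for some $j$.

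Next I would set up a target term for Lemma~\ref{prop:newCom}. Choose a source term $Q$ with $\freeNames{Q} = N$ and $Q \reachS$ (for instance $\success$ in parallel with $\piOutput{n}{n}{\nullTerm}$ for each $n \in N$; I assume w.l.o.g.\ $a, b \notin N$). Then $\freeNames{\match{a}{b}Q} = N \cup \Set[]{ a, b }$, so $\enco{\match{a}{b}Q} = \contextMatch{a}{b}{N \cup \Set[]{ a, b }}{\enco{Q}}$; write $P$ for this term. Since $a \neq b$ the match never passes, so $\left( \match{a}{b}Q \right) \nReachS$ and success sensitiveness gives $P \nReachS$. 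On the other hand $\sigma\!\left( \match{a}{b}Q \right) = \match{a}{a}\sigma(Q) \equiv \sigma(Q)$ reaches success (Observation~\ref{obs:subsSteps} and Lemma~\ref{prop:propequiv}), so by success sensitiveness and name invariance $\sigma'(P) \asymp \enco{\sigma\!\left( \match{a}{b}Q \right)} \reachS$, and as $\asymp$ respects success, $\sigma'(P) \reachS$. Lemma~\ref{prop:newCom} now yields a derivative $P \equiv\steps P'$ carrying an unguarded parallel composition with an unguarded input on $c$ and output on $d$ such that $\sigma'(c) = \sigma'(d)$, $c \neq d$, and $c, d \in \freeNames{P}$. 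The injectivity-modulo fact forces $\Set{ c, d } = \Set{ \proj{\vap(a)}{i}, \proj{\vap(b)}{i} }$ for some $i$, fixing the index. Finally I would factor the whole derivation through the context: since by Lemma~\ref{lem:contextNotUnguardContinuation} the context cannot unguard its hole, $\enco{Q}$ never participates in a step, so $P' = \context{C'}{\enco{Q}}$ with $\contextMatch{a}{b}{N \cup \Set[]{ a, b }}{\hole} \equiv\steps \context{C'}{\hole}$ and the input/output guards lying in the context part, providing the required $\context{C_1}{\hole}, \ldots, \context{C_4}{\hole}$; as $a, b \notin N$, the subjects $c, d$ are free in the context rather than contributed by $\enco{Q}$. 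This establishes the first conjunct.

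For the second conjunct I would argue necessity. Let $\sigma'$ be arbitrary with $\sigma'\!\left( \contextMatch{a}{b}{N \cup \Set[]{ a, b }}{\hole} \right) \steps \context{C'}{\hole}$ and $\hole \in \ungSub{\context{C'}{\hole}}$. Because the unsubstituted context cannot unguard its hole (Lemma~\ref{lem:contextNotUnguardContinuation}) while substitutions never disable steps (Observation~\ref{obs:subsSteps}), at least one step of this reduction must be a communication that $\sigma'$ enables by identifying two subjects; tracing to such a step gives a derivative of the \emph{unsubstituted} context in which an input on $c'$ and an output on $d'$ are unguarded with $c' \neq d'$, $\sigma'(c') = \sigma'(d')$, and $c', d'$ free in the context. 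To identify this pair I would eliminate every possibility other than a $\left( \proj{\vap(a)}{l}, \proj{\vap(b)}{l} \right)$ pair by a translated-substitution argument: if $c' = \proj{\vap(m_1)}{l}$ and $d' = \proj{\vap(m_2)}{l}$ with $\Set{ m_1, m_2 } \neq \Set{ a, b }$, then at least one of $m_1, m_2$ lies in $N$, so there is a source substitution $\rho$ with $\rho(m_1) = \rho(m_2)$ yet $\rho(a) \neq \rho(b)$; its translation $\rho'$ unifies $c', d'$ and hence enables the very same communication, unguarding the hole, so that $\rho'\!\left( \contextMatch{a}{b}{N \cup \Set[]{ a, b }}{\enco{Q}} \right) \reachS$, whereas $\enco{\rho\!\left( \match{a}{b}Q \right)} \nReachS$ since $\rho(a) \neq \rho(b)$, contradicting name invariance together with success sensitiveness. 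Thus $\Set{ c', d' } = \Set{ \proj{\vap(a)}{l}, \proj{\vap(b)}{l} }$, so $\sigma'$ unifies component $l$ and the enabling step is a communication on $\sigma'\!\left( \proj{\vap(b)}{l} \right)$.

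The hard part, which I expect to be the main obstacle, is to show that $l = i$, i.e.\ that the critical component is the \emph{same} one found in the first part and that no substitution can unguard the hole while bypassing component $i$; one must also exclude the case where $c', d'$ are reserved names, on which translations act as the identity so that the contradiction above does not directly apply, and handle the fact that the displayed reduction (rather than merely \emph{some} reduction) must contain the step on $\sigma'\!\left( \proj{\vap(b)}{i} \right)$, which interacts awkwardly with reproducing $\sigma'$-dependent steps under $\rho'$. I would attack the index-coincidence by exploiting that the continuation occupies a single hole (compositionality) and that, by the first part, unguarding the hole in the substitution-free derivative is effected precisely by the component-$i$ communication; combined with the characterisation above this should force every enabling communication to share that component, with reserved-name subjects ruled out by a further appeal to Lemma~\ref{lem:contextCannotReachSuccess}. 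Making this watertight is the delicate step.
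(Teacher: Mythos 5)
Your proof of the first conjunct is correct and is essentially the paper's own argument: the same kind of counterexample ($\match{a}{b}\left( \success \mid P \right)$ in the paper, your $\match{a}{b}Q$ with $Q$ successful), the same non-injective substitution unifying $a$ and $b$, then success sensitiveness followed by name invariance followed by Lemma~\ref{prop:newCom}, then the renaming-policy disjointness together with Observation~\ref{obs:renamingPolicyVector} to force $\Set{ c, d } = \Set{ \proj{\vap(a)}{i}, \proj{\vap(b)}{i} }$, and finally Lemma~\ref{lem:contextNotUnguardContinuation} to push the derivation and the two prefixes into the context part. Your up-front explicit computation of $\sigma'$ (mapping each $\proj{\vap(b)}{j}$ to $\proj{\vap(a)}{j}$ and fixing everything else) is a harmless reorganisation of the paper's a-posteriori argument about the domain of $\sigma'$; both rest on the same two facts about $\vap$.

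The second conjunct is where your proposal has a genuine gap, which you yourself partly flag. Your plan is to re-identify, for each \emph{arbitrary} target substitution $\sigma'$, which pair of names it must unify, ruling out wrong pairs via translated source substitutions $\rho'$. This cannot be completed as stated: (i) an arbitrary target substitution need not be the translation of any source substitution, so name invariance gives no handle on it at all|this is exactly why you cannot dispose of reserved-name subjects; (ii) the step ``$\rho'$ unifies $c', d'$ and hence enables the very same communication, unguarding the hole'' is a non sequitur, since enabling that one communication under $\rho'$ does not imply that the $\rho'$-instance can reach the hole-unguarding configuration that the $\sigma'$-instance reached ($\sigma'$ may enable further steps used along the way); and (iii) the index coincidence $l = i$ and the claim that the \emph{given} reduction contains a step on $\sigma'\!\left( \proj{\vap(b)}{i} \right)$ remain open. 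The paper never performs any per-substitution analysis, so none of these obstacles arise. Instead it reads the second conjunct off the structure already established in the first part: by Lemma~\ref{lem:contextNotUnguardContinuation} the hole stays guarded under every context-internal reduction, and in the derivative $\context{C'}{\hole}$ the hole lies \emph{behind} the prefixes $\piInput{c}{y}{\context{C_3}{\hole}}$ and $\piOutput{d}{y}{\context{C_4}{\hole}}$. Hence, under any substitution whatsoever, the hole can only become unguarded by a step that consumes these very prefixes, which requires that substitution to unify $c = \proj{\vap(a)}{i}$ with $d = \proj{\vap(b)}{i}$ and forces every unguarding execution to contain a step on that unified channel. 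The universal claim is thus anchored to the single pair $(c,d)$ and the single index $i$ found in the first conjunct, so there is no pair re-identification, no reserved-name case, and no index coincidence left to prove. That anchoring step is the idea your proposal is missing.
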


\begin{proof}
	Consider a source term $ \match{a}{b}\left( \success \mid P \right) $, where $ P \in \procPi $ is such that $ \freeNames{P} = N $, and the substitution $ \sigma : \mc \to \mc $ with $ \sigma = \subs{b}{a} $. Then $ \left( \match{a}{b}\left( \success \mid P \right) \right) \nReachS $ and $ \sigma\!\left( \match{a}{b}\left( \success \mid P \right) \right) \reachS $. By success sensitiveness, then $ \enco{\match{a}{b}\left( \success \mid P \right)} \nReachS $, \ie $ \contextMatch{a}{b}{N \cup \Set[]{ a, b }}{\enco{\success \mid P}} \nReachS $, and $ \enco{\sigma\!\left( \match{a}{b}\left( \success \mid P \right) \right)} \reachS $.
	By name invariance, then $ \enco{\sigma\!\left( \match{a}{b}\left( \success \mid P \right) \right)} \asymp \sigma'\!\left( \enco{\match{a}{b}\left( \success \mid P \right)} \right) = \sigma'\!\left( \contextMatch{a}{b}{N \cup \Set[]{ a, b }}{\enco{\success \mid P}} \right) $ and thus, because $ \asymp $ respects success, $ \sigma'\!\left( \contextMatch{a}{b}{N \cup \Set[]{ a, b }}{\enco{\success \mid P}} \right) \reachS $, where $ \sigma' $ is such that $ \vap\!\left( \sigma\!\left( n \right) \right) = \sigma'\!\left( \vap\!\left( n \right) \right) $ for every $ n \in \mc $.
	By Lemma~\ref{prop:newCom}, $ \sigma'\!\left( \contextMatch{a}{b}{N \cup \Set[]{ a, b }}{\enco{\success \mid P}} \right) \reachS $ and $ \contextMatch{a}{b}{N \cup \Set[]{ a, b }}{\enco{\success \mid P}} \nReachS $ imply that there are $ T', T_1, T_2, T_3, T_4 \in \procPiNoMatch $, $ y \in \mc $, and $ c, d \in \freeNames{\contextMatch{a}{b}{N \cup \Set[]{ a, b }}{\enco{\success \mid P}}} $ such that $ \contextMatch{a}{b}{N \cup \Set[]{ a, b }}{\enco{\success \mid P}} \equiv\steps T' $, $ \left( T_1 \mid T_2 \right) \in \ungSub{T'} $, $ \piInput{c}{y}{T_3} \in \ungSub{T_1} $, $ \piOutput{d}{y}{T_4} \in \ungSub{T_2} $, $ \sigma'\!\left( c \right) = \sigma'\!\left( d \right) $, and $ c \neq d $.
	
	By Lemma~\ref{lem:contextNotUnguardContinuation}, the context $ \contextMatch{a}{b}{N \cup \Set[]{ a, b }}{\hole} $ cannot unguard its hole. 
	We conclude that within $ T' $ the encoded continuation $ \enco{T} $ is still guarded and can only be unguarded in $ \sigma'\!\left( \enco{\match{a}{b}\left( \success \mid P \right)} \right) $ if $ \sigma'(c) = \sigma'(d) $ and to unguard the continuation there is a step on channel $ \sigma'(c) $.
	Thus $ c, d \in \freeNames{\contextMatch{a}{b}{N \cup \Set[]{ a, b }}{\hole}} $ and there are $ \context{C'}{\hole}, \context{C_1}{\hole}, \context{C_2}{\hole}, \context{C_3}{\hole}, \context{C_4}{\hole} \in \procPiNoMatch \to \procPiNoMatch $ such that
	\begin{itemize}
		\item $ \contextMatch{a}{b}{N \cup \Set[]{ a, b }}{\hole} \equiv\steps \context{C'}{\hole} $,
		\item $ \left( \context{C_1}{\hole} \mid \context{C_2}{\hole} \right) \in \ungSub{\context{C'}{\hole}} $,
		\item $ \piInput{c}{y}{\context{C_3}{\hole}} \in \ungSub{\context{C_1}{\hole}} $, and
		\item $ \piOutput{d}{y}{\context{C_4}{\hole}} \in \ungSub{\context{C_2}{\hole}} $.
	\end{itemize}
	
	It remains to prove $ \Set{ c, d } = \Set{ \proj{\vap\!\left( a \right)}{i}, \proj{\vap\!\left( b \right)}{i} } $. Since $ \sigma'\!\left( c \right) = \sigma'\!\left( d \right) $ but $ c \neq d $, at least one of the names $ c, d $ has to be in the domain of $ \sigma' $. Assume only one of the names|say $ c $|is in the domain of $ \sigma' $, \ie $ \sigma'\!\left( d \right) = d $. By name invariance, all names that are in the domain of $ \sigma' $ are (parts of) the translation of source term names, \ie there is some source name $ x \in \mc $ such that $ c \in \vap\!\left( x \right) $.
	Because $ \sigma' $ is such that $ \vap\!\left( \sigma\!\left( n \right) \right) = \sigma'\!\left( \vap\!\left( n \right) \right) $ for every $ n \in \mc $, then there is some $ i \in \Set{ 1, \ldots, \length{\vap\!\left( x \right)} } $ such that $ \proj{\vap\!\left( \sigma\!\left( x \right) \right)}{i} = \sigma'\!\left( \proj{\vap\!\left( x \right)}{i} \right) = \sigma'\!\left( c \right) = \sigma'\!\left( d \right) = d $. Thus also $ d $ is equal to (a part of) the translation of a source term name. The same holds if we change the roles of $ c $ and $ d $.
	Hence|regardless of whether only one or both names are in the domain of $ \sigma' $|there are two source names $ x, y \in \mc $ and $ i, j \in \Set{ 1, \ldots, n } $ with $ \length{\vap\!\left( x \right)} = n = \length{\vap\!\left( y \right)} $ such that $ c = \proj{\vap\!\left( x \right)}{i} $ and $ d = \proj{\vap\!\left( y \right)}{j} $.
	
	Because $ \sigma' $ is such that $ \vap\!\left( \sigma\!\left( n \right) \right) = \sigma'\!\left( \vap\!\left( n \right) \right) $ for every $ n \in \mc $, $ \sigma'\!\left( c \right) = \sigma'\!\left( d \right) $ implies
	\vspace{-1em}
	\begin{align*}
		\proj{\vap\!\left( \sigma\!\left( x \right) \right)}{i} = \sigma'\!\left( \proj{\vap\!\left( x \right)}{i} \right) = \sigma'\!\left( c \right) = \sigma'\!\left( d \right) = \sigma'\!\left( \proj{\vap\!\left( y \right)}{j} \right) = \proj{\vap\!\left( \sigma\!\left( y \right) \right)}{j}.
	\end{align*}
	
	\vspace{-1em}
	\noindent
	Because the renaming policy $ \vap $ is such that $ \vap\!\left( u \right) \cap \vap\!\left( v \right) = \emptyset $ whenever $ u \neq v $, the above equation, \ie $ \proj{\vap\!\left( \sigma\!\left( x \right) \right)}{i} = \proj{\vap\!\left( \sigma\!\left( y \right) \right)}{j} $, implies $ \sigma\!\left( x \right) = \sigma\!\left( y \right) $ and, because of Observation~\ref{obs:renamingPolicyVector}, $ i = j $.
	Then $ c = \proj{\vap\!\left( x \right)}{i} $, $ d = \proj{\vap\!\left( y \right)}{i} $, and $ c \neq d $ imply $ x \neq y $.
	Because $ \sigma = \subs{b}{a} $, \ie $ \sigma(a) = \sigma(b) \neq \sigma(n) $ and $ \sigma\!\left( n \right) = n $ for all $ n \in \left( \mc \setminus \Set{ a, b } \right) $, $ \sigma\!\left( x \right) = \sigma\!\left( y \right) $ and $ x \neq y $ imply $ \Set{ x, y } = \Set{ a, b } $.
	From $ \Set{ x, y } = \Set{ a, b } $, $ c = \proj{\vap\!\left( x \right)}{i} $, and $ d = \proj{\vap\!\left( y \right)}{i} $ we conclude $ \Set{ c, d } = \Set{ \proj{\vap\!\left( a \right)}{i}, \proj{\vap\!\left( b \right)}{i} } $ as required.
\end{proof}

\noindent
A very important consequence of the lemma above is the existence of the index $ i $ for all contexts $ \contextMatch{a}{b}{N \cup \Set[]{ a, b }}{\hole} $ regardless of $ N $ and of the terms that may be inserted in the hole. Note that the ``there is some'' does not necessarily imply that there is just one such $ i $. If the renaming policy splits up a source term name into several target term names then different parts of this vector can be used to simulate the test for equality. However, the above lemma states that there is at least one such $ i $, \ie at least one part of the translation of source term names is used to implement the required communication partners.

To derive the separation result we need a counterexample that combines two match prefixes in parallel.
Therefore we need some information on the context $ \contextPar{N}{\hole_1}{\hole_2} $ that is introduced by $ \enco{\cdot} $ according to compositionality to translate the parallel operator. Note that this context is parameterised on the set $ N $ that consists of the free names of the two parallel components that should be encoded. Moreover the two holes serve as placeholders for the encoding of the left and the right hand side of the source term.
Similar to the context introduced to encode the match prefix, the context that is introduced to encode the parallel operator cannot reach success on its own, \ie $ \contextPar{N}{\hole_1}{\hole_2} \nReachS $.

\begin{lemma}
	\label{lem:parContextCannotReachSuccess}
	Let \encod be a valid encoding from \piT into \piNM.
	Let $ N \subseteq \mc $ be a finite set of names
	and $ \contextPar{N}{\hole_1}{\hole_2} $ be the context that is introduced by $ \enco{\cdot} $ to encode the parallel operator.
	Then $ \contextPar{N}{\hole_1}{\hole_2} $ cannot reach success on its own, \ie:
	\begin{align*}
		\contextPar{N}{\hole_1}{\hole_2} \nReachS
	\end{align*}
\end{lemma}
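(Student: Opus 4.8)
The plan is to mirror the proof of Lemma~\ref{lem:contextCannotReachSuccess}, replacing the single-hole match context by the two-hole parallel context and choosing as witnesses two source terms whose parallel composition can never become successful. I would argue by contradiction: assume $ \contextPar{N}{\hole_1}{\hole_2} \reachS $, so there is a context $ \context{C'}{\hole_1, \hole_2} : (\procPiNoMatch)^2 \to \procPiNoMatch $ with $ \contextPar{N}{\hole_1}{\hole_2} \steps \context{C'}{\hole_1, \hole_2} $ and $ \context{C'}{\hole_1, \hole_2} \hasS $. Since $ \ungSub{\hole_i} = \Set{\hole_i} $, a hole never contributes an unguarded $ \success $; hence the unguarded occurrence of $ \success $ in $ \context{C'}{\hole_1, \hole_2} $ belongs to the context proper, independently of whatever is later placed into $ \hole_1, \hole_2 $.

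Next I would produce suitable source terms. For a finite $ N \subseteq \mc $ choose $ S_1, S_2 \in \procPi $ with $ S_2 = \nullTerm $ and with $ S_1 $ the parallel composition of the terms $ \piOutput{n}{n}{\nullTerm} $ over all $ n \in N $ (so $ S_1 = \nullTerm $ if $ N = \emptyset $). Then $ \freeNames{S_1, S_2} = N $ and neither component contains an occurrence of $ \success $. Because no reduction rule ever creates a fresh $ \success $ — communication only substitutes names, which fixes the constant $ \success $, and $ \tau $-steps only discard guards — the absence of $ \success $ is preserved along $ \steps $, so every term reachable from $ S_1 \mid S_2 $ is again free of $ \success $, and thus $ (S_1 \mid S_2) \nReachS $.

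Now I would transfer the contradiction through compositionality. By compositionality $ \enco{S_1 \mid S_2} = \contextPar{N}{\enco{S_1}, \enco{S_2}} $. The reduction sequence witnessing $ \contextPar{N}{\hole_1}{\hole_2} \steps \context{C'}{\hole_1, \hole_2} $ treats the holes as inert placeholders; since a hole is neither an action prefix nor a $ \tau $-guard it can never be (part of) a redex, so exactly the same rule applications remain available after the holes are filled, giving $ \contextPar{N}{\enco{S_1}, \enco{S_2}} \steps \context{C'}{\enco{S_1}, \enco{S_2}} $. As the unguarded $ \success $ of $ \context{C'}{\hole_1, \hole_2} $ lies outside the holes, we get $ \context{C'}{\enco{S_1}, \enco{S_2}} \hasS $, whence $ \enco{S_1 \mid S_2} \reachS $. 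Together with $ (S_1 \mid S_2) \nReachS $ this contradicts success sensitiveness, proving $ \contextPar{N}{\hole_1}{\hole_2} \nReachS $.

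I expect the only genuine subtlety to be the lifting of the bare-context reduction to the filled term: one must check that none of the steps performed ``on its own'' by $ \contextPar{N}{\hole_1}{\hole_2} $ relies on the hole contents, which holds precisely because a hole is never a redex. The free-name bookkeeping $ \freeNames{S_1, S_2} = N $, required so that compositionality instantiates this particular context, is the other point to get right, but it is handled by the explicit choice of $ S_1 $ above. Everything else is a direct echo of Lemma~\ref{lem:contextCannotReachSuccess}.
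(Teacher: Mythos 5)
Your proposal is correct and follows essentially the same route as the paper's own proof: assume the context reaches an unguarded $\success$ on its own, fill the holes with encodings of success-free source terms whose free names are exactly $N$ (the paper pairs an arbitrary $P$ with $\freeNames{P}=N$, $P \nReachS$ against $\nullTerm$, where your explicit $S_1 \mid \nullTerm$ is just one concrete such choice), lift the context's reduction sequence and the unguarded $\success$ to the filled term via compositionality, and contradict success sensitiveness. Your extra care in justifying that hole-filling preserves both the reduction sequence and the unguarded occurrence of $\success$ only makes explicit what the paper leaves implicit.
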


\begin{proof}
	Assume the contrary, \ie there is a context $ \context{C'}{\hole_1; \hole_2} : \procPiNoMatch^2 \to \procPiNoMatch $ such that $ \contextPar{N}{\hole_1}{\hole_2} \steps \context{C'}{\hole_1; \hole_2} $ and $ \context{C'}{\hole_1; \hole_2} \hasS $. Let $ P \in \procPiNoMatch $ such that $ \freeNames{P} = N $ and $ P \nReachS $. Then $ \enco{P \mid \nullTerm} = \contextPar{N}{\enco{P}}{\enco{\nullTerm}} $, $ \contextPar{N}{\enco{P}}{\enco{\nullTerm}} \steps \context{C'}{\enco{P}; \enco{\nullTerm}} $ and, because of $ \context{C'}{\hole_1; \hole_2} \hasS $, we also have $ \context{C'}{\enco{P}; \enco{\nullTerm}} \hasS $. Hence $ \enco{P \mid \nullTerm} \reachS $ but $ \left( P \mid \nullTerm \right) \nReachS $, because $ P \nReachS $. This contradicts success sensitiveness.
\end{proof}

But in contrast to the context introduced in order to encode the match prefix the context $ \contextPar{N}{\hole_1}{\hole_2} $ has always, \ie regardless of a substitution, to be able to unguard its holes on its own.

\begin{lemma}
	\label{lem:parContextUnguard}
	Let \encod be a valid encoding from \piT into \piNM.
	Let $ N \subseteq \mc $ be a finite set of names
	and $ \contextPar{N}{\hole_1}{\hole_2} $ be the context that is introduced by $ \enco{\cdot} $ to encode the parallel operator.
	Then:
	\begin{align*}
		\exists T \in \procPiNoMatch \pkt \contextPar{N}{\hole_1}{\hole_2} \steps T \wedgeL \hole_1, \hole_2 \in \ungSub{T}
	\end{align*}
\end{lemma}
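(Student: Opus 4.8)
The plan is to exhibit, for the bare context $\contextPar{N}{\hole_1}{\hole_2}$, an execution that structurally exposes \emph{both} holes, and to extract this execution from the encoding of a suitable successful source term by means of success sensitiveness together with Lemma~\ref{lem:parContextCannotReachSuccess}. The guiding intuition is that the parallel operator imposes no precondition on its components — both sides of $P_1 \mid P_2$ may act immediately — so a valid encoding must not be able to keep either hole permanently guarded; this is what I turn into a contradiction.

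I would first treat each hole position separately. For a fixed $i \in \{1,2\}$ choose source terms $P_1, P_2 \in \procPi$ with $\freeNames{P_1} \cup \freeNames{P_2} = N$ such that $P_i \hasS$ while $P_{3-i}$ is neither successful nor able to perform any reduction (a reduction-free, non-successful filler carrying exactly the free names needed to reach $N$). Then $P_1 \mid P_2 \reachS$, so by success sensitiveness $\enco{P_1 \mid P_2} = \contextPar{N}{\enco{P_1}}{\enco{P_2}} \reachS$. This success cannot come from the context itself by Lemma~\ref{lem:parContextCannotReachSuccess}, and it cannot come from the filler side either: since $P_{3-i}$ has no reductions, operational soundness forces every step of $\enco{P_{3-i}}$ to be administrative (keeping it $\asymp$-equivalent to itself), so by success sensitiveness applied to $P_{3-i} \nReachS$ the filler never exposes $\success$. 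Hence the exposed $\success$ must originate inside hole $i$, which forces $\contextPar{N}{\enco{P_1}}{\enco{P_2}}$ to unguard $\hole_i$. The decisive observation is that this unguarding is performed by the \emph{context's own} reductions: while a hole is still guarded its content is an inert subterm and can perform no step, so no hole content can act before its own hole has been exposed. Consequently the first hole exposed along any success-reaching execution is exposed by context-internal steps only, and these steps lift to the bare context, yielding $\contextPar{N}{\hole_1}{\hole_2} \steps T_i$ with $\hole_i \in \ungSub{T_i}$.

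It then remains to secure both exposures within a \emph{single} execution, i.e.\ to reach one $T$ with $\hole_1, \hole_2 \in \ungSub{T}$, and I expect essentially all the technical weight to sit here. When $N$ contains a name $x$ shared by the two components this is immediate from the communication-forcing witness $S = \piInput{x}{z}{\success} \mid \piOutput{x}{y}{\success}$: its only route to success is the synchronisation of the two components, so by operational soundness the step of $\enco{S}$ that exposes $\success$ realises this synchronisation, which requires both encoded components to have reduced past their guards and hence both holes to have been exposed in one and the same execution. The genuinely delicate case is when the components cannot share a name (in particular $N = \emptyset$), where one must rule out that the context opens $\hole_2$ only with the help of an action emitted by the content of $\hole_1$, and vice versa. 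Here the inertness-of-guarded-content argument is pushed further: a mutual ``one side first activates the other'' schedule would have to begin with a context-internal exposure, and iterating this observation together with operational soundness (the fillers offer no genuine reductions to drive an exposure) and, if a latent channel between context and hole survives, name invariance to rename it apart, one shows that the context can open both holes unaided. Formalising this transport-and-merge step for arbitrary $N$ is the main obstacle.
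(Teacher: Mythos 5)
Your first paragraph is, in substance, the paper's entire proof. The paper fixes $P$ with $\freeNames{P} = N$, feeds the encoding the source terms $\left( P \mid \success \right) \mid \nullTerm$ and $\nullTerm \mid \left( P \mid \success \right)$, and argues exactly as you do: by success sensitiveness $\contextPar{N}{\enco{P \mid \success}}{\enco{\nullTerm}} \reachS$; by Lemma~\ref{lem:parContextCannotReachSuccess} the context cannot produce that success on its own; by success sensitiveness $\enco{\nullTerm} \nReachS$, so neither can the filler; hence the context must unguard the hole holding $\enco{P \mid \success}$, and the steps doing so lift to the bare context because a guarded hole is inert. Your reduction-free, non-successful filler is a mild generalisation of the paper's $\nullTerm$. (Strictly, your ``first exposed hole'' remark only yields that \emph{some} hole is exposed by pure context steps---the filler's hole could be exposed first and then collaborate with the context---but the paper is no more careful on this point than you are.)

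The real divergence is your second half, and there the comparison favours you rather than exposing you: the paper contains \emph{no} merge step at all. It establishes the two holes separately (``the argumentation for the second hole is similar'') and then simply asserts the lemma, even though the statement requires a single $T$ with $\hole_1, \hole_2 \in \ungSub{T}$. You are right that this conjunction does not follow logically from the two separate claims---a context behaving like $\tau.\hole_1 + \tau.\hole_2$ unguards each hole in some execution but never both in the same one---so the obstacle you flagged is a genuine looseness, but it is a looseness of the paper's own proof, not a technique you failed to find in it. Be aware, though, that your sketched repair would not close the gap either: operational soundness only gives $T \steps \asymp \enco{S'}$ for some source derivative $S'$, so it cannot single out which target step ``realises'' the source synchronisation, and your witness $\piInput{x}{z}{\success} \mid \piOutput{x}{y}{\success}$ presupposes a shared free name between the components, which is precisely the case you admit you cannot extend to arbitrary $N$. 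In short: measured against the paper, your attempt covers everything the paper actually proves, and the step you could not finish is the step the paper silently skips.
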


\begin{proof}
	Consider a term $ P \in \procPi $ such that $ \freeNames{P} = N $.
	Note that, by success sensitiveness, $ \nullTerm \nReachS $ implies $ \enco{\nullTerm} \nReachS $.
	Then $ \enco{\left( P \mid \success \right) \mid \nullTerm} = \contextPar{N}{\enco{P \mid \success}}{\enco{\nullTerm}} $.
	By success sensitiveness, $ \contextPar{N}{\enco{P \mid \success}}{\enco{\nullTerm}} \reachS $.
	But if $ \contextPar{N}{\hole_1}{\hole_2} $ never unguards its first hole then $ \contextPar{N}{\enco{X}}{\enco{\nullTerm}} \nReachS $ for all $ X \in \procPi $, because by Lemma~\ref{lem:parContextCannotReachSuccess} the context cannot reach success on its own and also $ \enco{\nullTerm} \nReachS $. The argumentation for the second hole is similar with $ \nullTerm \mid \left( P \mid \success \right) $.
\end{proof}

Then we need to show that the context $ \contextPar{N}{\hole_1}{\hole_2} $ cannot bind the names that are used by the context $ \contextMatch{a}{b}{N \cup \Set[]{ a, b }}{\hole} $ to simulate the test for equality. As in the above example $ \piOutput{x}{a}{\nullTerm} \mid \piInput{x}{b}{\match{a}{b}\success} $, a communication step can unify at runtime the variables of a match prefix. Such a communication step naturally transmits the value for the match variables over a parallel operator, because communication is always between two communication partners that are composed in parallel. If this value is restricted on either side of the parallel operator the communication could not lead to the required unification. The match variables would still be considered as different and the match prefix as not satisfied. Thus for example neither $ \Res*{a}{\piOutput{x}{a}{\nullTerm}} \mid \piInput{x}{b}{\match{a}{b}\success} $ nor $ \piOutput{x}{a}{\nullTerm} \mid \Res*{a}{\piInput{x}{b}{\match{a}{b}\success}} $ reach success although in both cases the communication on $ x $ is still possible. Of course the term $ \Res*{a}{\piOutput{x}{a}{\nullTerm} \mid \piInput{x}{b}{\match{a}{b}\success}} $ reaches success. But for cases like this we can construct larger counterexamples as $ \Res*{a}{\piOutput{x}{a}{\nullTerm} \mid \nullTerm} \mid \piInput{x}{b}{\match{a}{b}\success} $ and to analyse the source term, in order to examine the places at which such a restriction would be allowed, violates the idea of a compositional encoding. Again name invariance allows us to ignore the communication on $ x $ and to directly concentrate on the induced substitution.

\begin{lemma}
	\label{lem:parContextNotRes}
	Let \encod be a valid encoding from \piT into \piNM.
	Let $ N \subseteq \mc $ be a finite set of names
	and $ \contextPar{N}{\hole_1}{\hole_2} $ be the context that is introduced by $ \enco{\cdot} $ to encode the parallel operator.
	Then:
	\begin{align*}
		\forall P, Q \in \procPi \pkt \forall a \in \freeNames{P \mid Q} \pkt \proj{\vap\!\left( a \right)}{i} \in \freeNames{\contextPar{N}{\enco{P}}{\enco{Q}}}
	\end{align*}
	where $ i \in \Set{ 1, \ldots, \length{\vap\!\left( a \right)} } $ is the index that exists according to Lemma~\ref{lem:matchReqCom}.
\end{lemma}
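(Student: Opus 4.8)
The plan is to argue by contradiction and drive it to a violation of success sensitiveness through Lemma~\ref{lem:matchReqCom}. Recall that, by that lemma, the hole of $ \contextMatch{a}{b}{N \cup \Set[]{ a, b }}{\hole} $ can only be unguarded by a step that identifies $ \proj{\vap\!\left( a \right)}{i} $ with $ \proj{\vap\!\left( b \right)}{i} $, its two communication partners sitting on the \emph{free} names $ \proj{\vap\!\left( a \right)}{i} $ and $ \proj{\vap\!\left( b \right)}{i} $ of the match context. So I assume the claim fails: there are $ P, Q \in \procPi $ and $ a \in \freeNames{P \mid Q} $ with $ N = \freeNames{P \mid Q} $ such that $ \proj{\vap\!\left( a \right)}{i} \notin \freeNames{\contextPar{N}{\enco{P}}{\enco{Q}}} $. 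Since $ a \in \freeNames{P} \cup \freeNames{Q} $, a standard consequence of name invariance yields $ \proj{\vap\!\left( a \right)}{i} \in \freeNames{\enco{P}} \cup \freeNames{\enco{Q}} $; hence this name can vanish from the free names of the composition only because the context $ \contextPar{N}{\hole_1}{\hole_2} $ itself restricts $ \proj{\vap\!\left( a \right)}{i} $ over the hole(s) carrying that occurrence. I would first record which holes the restriction covers, as this is exactly what the construction below must defeat.

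Next I would set up a counterexample whose top-level parallel composition reuses precisely this context. The guiding term communicates the free name $ a $ into a match variable across the parallel operator, i.e.\ (modulo the name bookkeeping discussed below) a term of the shape $ \piInput{x}{b}{\match{a}{b}\success} \mid \piOutput{x}{a}{\nullTerm} $: in the source the step on $ x $ performs the substitution $ \subs{a}{b} $, so $ \match{a}{b}\success $ becomes $ \match{a}{a}\success \equiv \success $ and the source reaches success; by success sensitiveness so must its encoding. By Lemma~\ref{lem:contextNotUnguardContinuation} the match context cannot unguard $ \enco{\success} $ on its own, and by Lemma~\ref{lem:parContextCannotReachSuccess} the parallel context contributes no success, so the encoding can reach success only through the identifying step of Lemma~\ref{lem:matchReqCom}; the value realising this identification is the translation of $ a $ shipped by the output component. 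Since the intervening communication on $ x $ merely induces the substitution $ \subs{a}{b} $, I would invoke name invariance to replace the explicit step by the corresponding $ \sigma' $ and work directly with $ \sigma'\!\left( \contextPar{N}{\enco{\cdot}}{\enco{\cdot}} \right) $; this keeps the free-name set fixed at $ N $ and avoids having to track the encoding of $ x $.

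The contradiction then rests on the observation that the restriction imposed by the parallel context alpha-separates the very two occurrences of $ \proj{\vap\!\left( a \right)}{i} $ that must be identified: the occurrence the match context uses as a communication channel and the occurrence transmitted by the output component become distinct bound names, so the value received for $ \proj{\vap\!\left( b \right)}{i} $ is a fresh name different from the match context's own $ \proj{\vap\!\left( a \right)}{i} $. Hence $ \sigma' $ cannot place the two partners of Lemma~\ref{lem:matchReqCom} on a common channel, the step on $ \proj{\vap\!\left( a \right)}{i} $ required to unguard the hole is impossible, the hole stays guarded, and $ \enco{S} \nReachS $ although $ S \reachS $, contradicting success sensitiveness.

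The main obstacle is the case in which the context restricts $ \proj{\vap\!\left( a \right)}{i} $ over \emph{both} holes at once. A flat two-component counterexample then fails, because the two occurrences share a single scope: the transmitted value is exactly the match context's own name, the identification succeeds, and success is reached after all. To rule this case out I would \emph{nest} the parallel composition so that the same context---which, by compositionality, depends only on the free-name set, which I keep equal to $ N $---is applied twice, as in $ \left( \piInput{x}{b}{\match{a}{b}\success} \mid \nullTerm \right) \mid \piOutput{x}{a}{\nullTerm} $. There the inner instance captures the match context's $ \proj{\vap\!\left( a \right)}{i} $ while the outer instance captures the output's $ \proj{\vap\!\left( a \right)}{i} $, forcing them into different restriction instances and hence making them alpha-distinct regardless of how the single restriction is placed. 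The remaining work is pure bookkeeping: building $ S $ from the given $ P, Q $ so that its free names are exactly $ N $ (the technicality flagged in the footnote to Lemma~\ref{lem:matchReqCom}) and using name invariance instead of a genuine channel so as not to enlarge the free-name set.
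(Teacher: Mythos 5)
Your second and third paragraphs, read in the picture you yourself adopt (``work directly with $\sigma'\!\left( \contextPar{N}{\enco{\cdot}}{\enco{\cdot}} \right)$''), are essentially the paper's proof: the paper takes $ S = \match{a}{b}\left( \success \mid P \right) \mid \nullTerm $ and an external $ \sigma $ with $ \sigma(a) = \sigma(b) $, obtains $ \sigma'\!\left( \contextPar{N}{\enco{\match{a}{b}\left( \success \mid P \right)}}{\enco{\nullTerm}} \right) \reachS $ from success sensitiveness, name invariance and the fact that $ \asymp $ respects success, localises that success inside the match context via Lemma~\ref{lem:parContextCannotReachSuccess}, Lemma~\ref{lem:parContextUnguard} and Lemma~\ref{lem:contextCannotReachSuccess}, and then applies the second half of Lemma~\ref{lem:matchReqCom}: the hole can only be unguarded if $ \sigma' $ identifies $ \proj{\vap\!\left( a \right)}{i} $ and $ \proj{\vap\!\left( b \right)}{i} $, which is impossible for occurrences captured by a restriction of the parallel context, since substitutions cannot rename restricted names.

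The genuine problem is your last paragraph. The ``both holes'' case you try to defuse does not exist in the external-substitution picture: $ \sigma' $ is applied from \emph{outside} whatever binder $ \contextPar{N}{\hole_1}{\hole_2} $ introduces, so a bound occurrence of $ \proj{\vap\!\left( a \right)}{i} $ is alpha-renamed to a fresh name and can never become equal to $ \sigma'\!\left( \proj{\vap\!\left( b \right)}{i} \right) $, regardless of whether the binder spans one hole or both. There is no ``transmitted value'' that could ``be exactly the match context's own name''|that scenario belongs to the source-level communication on $ x $, which you had already eliminated by passing to $ \sigma' $. By reintroducing it you create a case split your own tools cannot handle: to reason about an actual communication inside the counterexample you would need operational correspondence to follow how $ \enco{\cdot} $ simulates the step on $ x $, and none of the criteria tells you what value the encoded input ``receives'' or that it is related to $ \vap\!\left( a \right) $ at all; this is far more than the ``pure bookkeeping'' you defer, and the nested counterexample buys you nothing once the spurious case is gone. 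A second, smaller gap: your opening claim that name invariance yields $ \proj{\vap\!\left( a \right)}{i} \in \freeNames{\enco{P}} \cup \freeNames{\enco{Q}} $ for arbitrary $ P, Q $ is not a consequence of any of the five criteria. What the argument can and needs to establish is only that $ \contextPar{N}{\hole_1}{\hole_2} $ does not bind $ \proj{\vap\!\left( a \right)}{i} $ over its holes; the free occurrence itself is supplied, when needed, by the first half of Lemma~\ref{lem:matchReqCom}, which puts $ \proj{\vap\!\left( a \right)}{i} $ and $ \proj{\vap\!\left( b \right)}{i} $ among the free names of $ \contextMatch{a}{b}{N \cup \Set[]{ a, b }}{\hole} $.
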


\begin{proof}
	Consider the term $ S = \match{a}{b}\left( \success \mid P \right) \mid \nullTerm $, where $ P \in \procPi $ is such that $ \freeNames{P} = N $, and a substitution $ \sigma : \mc \to \mc $ such that $ \sigma\!\left( a \right) = \sigma\!\left( b \right) $.
	Then $ \sigma\!\left( S \right) \reachS $. Thus, by success sensitiveness, $ \enco{ \sigma\!\left( S \right) } \reachS $.
	By name invariance, $ \enco{ \sigma\!\left( S \right) } \asymp \sigma'\left( \enco{S} \right) = \sigma'\left( \contextPar{N}{\enco{\match{a}{b}\left( \success \mid P \right)}}{\enco{\nullTerm}} \right) $, where $ \vap\!\left( \sigma\!\left( n \right) \right) = \sigma'\!\left( \vap\!\left( n \right) \right) $ for every $ n \in \mc $.
	Then, because $ \asymp $ respects success, $ \sigma'\left( \contextPar{N}{\enco{\match{a}{b}\left( \success \mid P \right)}}{\enco{\nullTerm}} \right) \reachS $.
	
	By Lemma~\ref{lem:parContextCannotReachSuccess} and Observation~\ref{obs:subsSteps}, $ \sigma'\!\left( \contextPar{N}{\hole_1}{\hole_2} \right) $ cannot reach success. But, by Lemma~\ref{lem:parContextUnguard} and Observation~\ref{obs:subsSteps}, it holes can become unguarded. Also $ \enco{\nullTerm} \nReachS $, because $ \nullTerm \nReachS $ and because of success sensitiveness.
	Thus $ \enco{\match{a}{b}\left( \success \mid P \right)} = \contextMatch{a}{b}{N \cup \Set[]{ a, b }}{\enco{\success \mid P}} $ has to reach success within the term $ \sigma'\left( \contextPar{N}{\enco{\match{a}{b}\left( \success \mid P \right)}}{\enco{\nullTerm}} \right) $.
	By Lemma~\ref{lem:contextCannotReachSuccess}, also $ \contextMatch{a}{b}{N \cup \Set[]{ a, b }}{\hole} $ cannot reach success and, by Lemma~\ref{lem:matchReqCom}, the context $ \contextMatch{a}{b}{N \cup \Set[]{ a, b }}{\hole} $ can only unguard its hole if $ \sigma'\left( \proj{\vap\!\left( a \right)}{i} \right) = \sigma'\left( \proj{\vap\!\left( b \right)}{i} \right) $. Note that we choose an arbitrary $ a $ and $ b $ here. Thus, since substitutions cannot rename restricted names and because $ \sigma'\left( \contextPar{N}{\enco{\match{a}{b}\left( \success \mid P \right)}}{\enco{\nullTerm}} \right) \reachS $, the context $ \contextPar{N}{\hole_1}{\hole_2} $ cannot restrict the $ i $'th part of the translation of a name.
\end{proof}

Finally we show that there is no valid encoding from \piT into \piNM, by assuming the contrary and deriving a contradiction. As already mentioned, we use a counterexample that consists of two parallel composed match prefixes. More precisely we use $ \match{a}{b}\success \mid \match{b}{a}\success $, \ie swap the match variables on the right side. Intuitively the contradiction is derived as follows: Since the context $ \contextMatch{a}{b}{N \cup \Set[]{ a, b }}{\hole} $ translates the match variables into free links of unguarded communication partners and because of the swapping of the matching variables on the right side, the parallel composition of the two variants of the context $ \contextMatch{a}{b}{N \cup \Set[]{ a, b }}{\hole} $|that are necessary to encode the counterexample|enable wrong communication steps between a communication partner from the left $ \contextMatch{a}{b}{N \cup \Set[]{ a, b }}{\hole} $ and a communication partner from the right $ \contextMatch{b}{a}{N \cup \Set[]{ a, b }}{\hole} $. We denote such a communication step as wrong, because in this case the communication cannot lead to the unguarding of the encoded continuation $ \enco{\success} $ without violating success sensitiveness. In order to reach success, the source term needs a substitution $ \sigma $ that unifies the match variables. Unfortunately, the same wrong communication can consume one of the communication partners in $ \enco{\sigma\left( \match{a}{b}\success \mid \match{b}{a}\success \right)} $ that is necessary to unguard the encoded continuation. A restoration of this communication partner leads by symmetry to divergence which violates the divergence reflection criterion. But without the possibility of a restoration, the wrong communication leads to an unsuccessful execution of $ \enco{\sigma\left( \match{a}{b}\success \mid \match{b}{a}\success \right)} $. This execution violates the combination of success sensitiveness and operational soundness.

\begin{theorem}
	\label{thm:noEnc}
	There is no valid encoding from \piT into \piNM.
\end{theorem}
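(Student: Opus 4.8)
The plan is to assume a valid encoding $\enco{\cdot}$ exists and to derive a contradiction from the single counterexample
$S = \match{a}{b}\success \mid \match{b}{a}\success$ with $a \neq b$, whose crucial feature is that the match variables are \emph{swapped} in the two parallel components. By compositionality,
$\enco{S} = \contextPar{\Set{a, b}}{\contextMatch{a}{b}{\Set{a, b}}{\enco{\success}}}{\contextMatch{b}{a}{\Set{a, b}}{\enco{\success}}}$.
First I would apply Lemma~\ref{lem:matchReqCom} to the left match context to obtain an index $i$ together with an unguarded input and output whose subjects form the set $\Set{\proj{\vap(a)}{i}, \proj{\vap(b)}{i}}$. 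Since the right component is $\match{b}{a}$, i.e.\ the image of $\match{a}{b}$ under the injective substitution swapping $a$ and $b$, name invariance forces the right context to be exactly the left context with $\vap(a)$ and $\vap(b)$ exchanged; hence the same index $i$ applies (and the witnessing reductions transfer by Observation~\ref{obs:subsSteps}) and the right context offers an input and an output on the swapped subjects. Concretely, after exposing these prefixes the left component offers an input on $\proj{\vap(a)}{i}$ and an output on $\proj{\vap(b)}{i}$, while the right component offers an input on $\proj{\vap(b)}{i}$ and an output on $\proj{\vap(a)}{i}$.

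Next I would argue that these four prefixes can be brought simultaneously into an unguarded parallel position with free subjects: Lemma~\ref{lem:parContextUnguard} unguards both holes of the parallel context, and Lemma~\ref{lem:parContextNotRes} guarantees that the parallel context restricts neither $\proj{\vap(a)}{i}$ nor $\proj{\vap(b)}{i}$. The swapping then enables two \emph{wrong} communications across the two components, namely the left input on $\proj{\vap(a)}{i}$ with the right output on $\proj{\vap(a)}{i}$, and the left output on $\proj{\vap(b)}{i}$ with the right input on $\proj{\vap(b)}{i}$. These steps are available already in $\enco{S}$, with no substitution. Since $a \neq b$ we have $S \nReachS$, so by success sensitiveness $\enco{S} \nReachS$; together with Lemma~\ref{lem:contextCannotReachSuccess} and Lemma~\ref{lem:parContextCannotReachSuccess} (the contexts cannot reach success on their own) this already tells us that no execution of $\enco{S}$, in particular none using the wrong communications, can unguard either copy of $\enco{\success}$.

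To force the contradiction I would invoke the substitution $\sigma = \subs{b}{a}$ unifying the match variables, so that $\sigma(S) \equiv \success \mid \success$ is successful and admits no reduction. By success sensitiveness $\enco{\sigma(S)} \reachS$, and by name invariance together with the fact that $\asymp$ respects success, $\sigma'(\enco{S}) \reachS$. By Observation~\ref{obs:subsSteps} the wrong communications survive the substitution, while by Lemma~\ref{lem:matchReqCom} the only route to unguarding the left (resp.\ right) continuation is through the designated input/output pair of that very component on the now unified subject. The wrong communications consume exactly these partners \emph{across} the two components, and the argument splits. \textbf{(a)} If the consumed partners cannot be restored, the two wrong communications drive $\sigma'(\enco{S})$ into a derivative $T^{\ast}$ with $T^{\ast} \nReachS$; transferring this execution through the bisimulation yields $\enco{\sigma(S)} \steps T'$ with $T' \nReachS$, and operational soundness against the stuck successful source $\sigma(S)$ (whose only derivative is $\sigma(S)$ itself, with $\enco{\sigma(S)} \reachS$) forces $T' \reachS$ — a contradiction. \textbf{(b)} If a consumed partner can be restored, then by the symmetry between the two swapped contexts the restoration can be iterated into an infinite reduction, contradicting divergence reflection since neither $S$ nor $\sigma(S)$ diverges.

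I expect the main obstacle to be the reduction-level bookkeeping in this case split, specifically establishing rigorously that the wrong communications genuinely obstruct success: one must show that the partners identified by Lemma~\ref{lem:matchReqCom} are the \emph{only} way to unguard a continuation and that a cross-component communication does not itself release one, so that in case (a) the derivative $T^{\ast}$ is truly unable to reach success. A second delicate point is formalizing ``restoration'' and the symmetry that turns a single undoing into a genuine \emph{infinite} computation, and arranging for that divergence to live in a term to which divergence reflection applies directly, since the weak reduction bisimulation $\asymp$ need not preserve divergence. Handling these two points, while keeping the contexts parametrised on their free names (which complicates the name-invariance comparison of the two components), is where the real work lies.
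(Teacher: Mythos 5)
Your proposal is correct and takes essentially the same route as the paper's own proof: the same counterexample $\match{a}{b}\success \mid \match{b}{a}\success$, the same use of compositionality and Lemma~\ref{lem:matchReqCom} combined with the name-invariance swap relating the two match contexts, and the same dichotomy between non-restorable wrong communications (blocking success) and restorable ones (yielding divergence, refuted by divergence reflection). Your case (a) simply unfolds the paper's Lemmas~\ref{lem:mustSuccessRespecting} and~\ref{lem:mustSuccessSensitiveness} into an explicit bisimulation transfer plus an application of operational soundness against the stuck successful source $\sigma(S)$, which is exactly how those lemmas are proved.
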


\begin{proof}
	Assume the contrary, \ie there is a valid encoding \encod from \piT into \piNM.
	Consider the term $ S = \match{a}{b}\success \mid \match{b}{a}\success $ and a substitution $ \sigma : \mc \to \mc $ such that $ \sigma(a) = \sigma(b) $.
	By success sensitiveness and because $ \enco{S} = \contextPar{\Set[]{ a, b }}{\contextMatch{a}{b}{\Set[]{ a, b }}{\enco{\success}}}{\contextMatch{b}{a}{\Set[]{ a, b }}{\enco{\success}}} $, $ S \nReachS $ implies that $ \contextPar{\Set[]{ a, b }}{\contextMatch{a}{b}{\Set[]{ a, b }}{\enco{\success}}}{\contextMatch{b}{a}{\Set[]{ a, b }}{\enco{\success}}} \nReachS $.
	By Lemma~\ref{lem:parContextUnguard}, there is some $ T $ such that $ \enco{S} \steps T $ and $ \contextMatch{a}{b}{\Set[]{ a, b }}{\enco{\success}}, \contextMatch{b}{a}{\Set[]{ a, b }}{\enco{\success}} \in \ungSub{T} $.
	Because $ \freeNames{\match{a}{b}P} = \freeNames{\match{b}{a}P} $ for all $ P $, $ \contextMatch{a}{b}{\Set[]{ a, b }}{\hole} = \Set{ \subst{\vap(b)}{\vap(a)}, \subst{\vap(a)}{\vap(b)} }\!\left( \contextMatch{b}{a}{\Set[]{ a, b }}{\hole} \right) $.
	By Lemma~\ref{lem:matchReqCom}, then there are $ i \in \Set{ 1, \ldots, \length{\vap(a)}} $, $ T' \in \procPiNoMatch $, $ \context{C_1}{\hole}, \context{C_2}{\hole}, \context{C_3}{\hole}, \context{C_4}{\hole} \in \procPiNoMatch \to \procPiNoMatch $, and $ y \in \mc $ such that $ \enco{S} \equiv\steps T' $ and the terms $ \piInput{\proj{\vap\!\left( a \right)}{i}}{y}{\context{C_1}{\enco{\success}}} $, $ \piOutput{\proj{\vap\!\left( a \right)}{i}}{y}{\context{C_2}{\enco{\success}}} $, $ \piInput{\proj{\vap\!\left( b \right)}{i}}{y}{\context{C_3}{\enco{\success}}} $, and $ \piOutput{\proj{\vap\!\left( b \right)}{i}}{y}{\context{C_4}{\enco{\success}}} $ are unguarded subterms of $ T' $.
	
	By name invariance, $ \enco{\sigma\!\left( S \right)} \asymp \sigma'\left( \enco{S} \right) = \sigma'\left( \contextPar{\Set[]{ a, b }}{\contextMatch{a}{b}{\Set[]{ a, b }}{\enco{\success}}}{\contextMatch{b}{a}{\Set[]{ a, b }}{\enco{\success}}} \right) $, where $ \vap\!\left( \sigma\!\left( n \right) \right) = \sigma'\!\left( \vap\!\left( n \right) \right) $ for every $ n \in \mc $.
	By Observation~\ref{obs:subsSteps}, $ \sigma'\left( \enco{S} \right) \equiv\steps \sigma'\!\left( T' \right) $, \ie again the inputs on the channels $ \proj{\vap\!\left( a \right)}{i} $ and $ \proj{\vap\!\left( b \right)}{i} $ can communicate between the two instances of the context $ \contextMatch{\cdot}{\cdot}{\Set[]{ a, b }}{\cdot} $ in $ \sigma'\!\left( T' \right) $.
	By the argumentation above these communications, \ie there an input from the left $ \contextMatch{a}{b}{\Set[]{ a, b }}{\enco{\success}} $ interacts with an output from the right $ \contextMatch{b}{a}{\Set[]{ a, b }}{\enco{\success}} $ or vice versa, cannot lead to the unguarding of $ \enco{\success} $.
	Note that if either the left $ \contextMatch{a}{b}{\Set[]{ a, b }}{\enco{\success}} $ or the right $ \contextMatch{b}{a}{\Set[]{ a, b }}{\enco{\success}} $ restores a wrongly consumed input term or output term on $ \proj{\vap\!\left( a \right)}{i} $ or $ \proj{\vap\!\left( b \right)}{i} $ then, because $ \contextMatch{a}{b}{\Set[]{ a, b }}{\hole} = \Set{ \subst{\vap(b)}{\vap(a)}, \subst{\vap(a)}{\vap(b)} }\!\left( \contextMatch{b}{a}{\Set[]{ a, b }}{\hole} \right) $, there is an execution there the other side also restores the corresponding counterpart. This leads back to the state before the respective communication step between the left $ \contextMatch{a}{b}{\Set[]{ a, b }}{\enco{\success}} $ and the right $ \contextMatch{b}{a}{\Set[]{ a, b }}{\enco{\success}} $ and thus to a divergent execution.
	The same holds if the context $ \contextPar{\Set[]{ a, b }}{\hole_1}{\hole_2} $ restores such an input term or output term.
	But since $ \sigma\!\left( S \right) $ has no divergent execution and because of divergence reflection, a divergent execution of $ \enco{\sigma\!\left( S \right)} $ violates our assumption that \encod is a valid encoding.
	Thus $ \sigma'\!\left( \enco{S} \right) $ cannot restore a wrongly consumed input term or output term on $ \proj{\vap\!\left( a \right)}{i} $ or $ \proj{\vap\!\left( b \right)}{i} $.
	
	By Lemma~\ref{lem:matchReqCom}, only a communication between the terms on channels $ \proj{\vap\!\left( a \right)}{i} $ and $ \proj{\vap\!\left( b \right)}{i} $ can unguard the continuation $ \enco{\success} $. Hence there is a finite maximal execution of $ \enco{\sigma\!\left( S \right)} $ in which the continuation $ \enco{\success} $ is never unguarded.
	Thus, by Lemma~\ref{lem:contextCannotReachSuccess} and Lemma~\ref{lem:parContextCannotReachSuccess}, no success is reached in this execution, \ie $ \sigma'\left( \enco{S} \right) \not\mustReachSuccessFinite $.
	By Lemma~\ref{lem:mustSuccessRespecting}, then $ \enco{\sigma\!\left( S \right)} \asymp \sigma'\left( \enco{S} \right) $ implies $ \enco{\sigma\!\left( S \right)} \not\mustReachSuccessFinite $.
	But, by Lemma~\ref{lem:mustSuccessSensitiveness}, $ \sigma\!\left( S \right) \mustReachSuccessFinite $ implies $ \enco{\sigma\!\left( S \right)} \mustReachSuccessFinite $.
	This is a contradiction.
\end{proof}


\section{Discussion}
\label{sec:discussion}

As mentioned above, also Carbone and Maffeis show in \cite{carbone} that the match prefix cannot be encoded within the \piCal. Moreover there are different encodings of the match prefix in modified variants and extensions of the \piCal. In this section we discuss the relation between these results and our separation result.

\subsection{The Match Prefix is a Native Operator of the Pi-Calculus}
\label{sec:comparison}

If we compare the approach in \cite{carbone} with ours, we observe that the considered variants of the \piCal are different.
We consider the full \piCal and its variant without the match prefix as source and target language.
In the literature there are different variants called ``full'' \piCal. We decide on the most general of these variants. In particular we consider a variant of the \piCal with free choice whereas \cite{carbone} allow only guarded choice in their target language.
Note that the source language considered in \cite{carbone} is an asynchronous variant of the \piCal, \ie is less expressive than the source language considered here \cite{palamidessi03,petersNestmann14,petersNestmannGoltz13}.
However, the only (counter)examples we use here are of the form $ \match{a}{b}X $, or $\match{a}{b}X \mid \match{b}{a}X$ where $ X $ is a combination of $ \success $, $ \nullTerm $, $ P $, and parallel composition for an arbitrary $ P $ with a fixed set of free names. Thus, our separation result remains valid if we change the source language to the asynchronous variant of the \piCal without choice that is used in \cite{carbone}.
Our target language is also more expressive, because we do not restrict it to guarded choice. More precisely, in \cite{carbone} the \piCal with guarded mixed choice is used. Accordingly, the current result can be considered stronger. However, concentrating only on guarded choice is commonly accepted and, more importantly, it might be easy to adapt the proof in \cite{carbone} to the more expressive target language.

\paragraph{Contribution 1.}
The main difference between the two approaches are the quality criteria, \ie in the conditions that are assumed to hold for all valid encodings.
Similar to \cite{palamidessi03}, Carbone and Maffeis require that an encoding must be uniform and reasonable.
By \cite{carbone} an encoding $\enco{\cdot}$ is \emph{uniform} if it translates the parallel operator homomorphically, \ie $ \enco{P \mid Q} = \enco{P}\mid \enco{Q} $, and if it respects permutations on free names, \ie for all $ \sigma $ there is some $ \theta $ such that $ \enco{\sigma(P)} = \theta(\enco{P}) $.
A \emph{reasonable} semantics, by \cite{carbone}, is one which distinguishes two processes $ P $ and $ Q $ whenever there exists a maximal execution of $ Q $ in which the observables are different from the observables in any maximal execution of $ P $.
Furthermore they require that an encoding should be able to distinguish deadlocks from livelocks, which is comparable to divergence reflection.

In contrast to uniformity, name invariance relates the substitution on the source term names with its translation on target term names. Already \cite{gorla} points out that name invariance is a more complex requirement than the above condition; but \cite{gorla} also argues that it is rather more detailed than more demanding. Moreover we claim that name invariance is not crucial for the above separation result.
The first condition of uniformity is a strictly stronger requirement than compositionality for the parallel operator as it is discussed for instance in \cite{petphd}. However the proof in \cite{carbone} does not use the homomorphic translation of the parallel operator.

The criterion on the reasonable semantics used in \cite{carbone} is even more demanding than the first part of uniformity. It states that a source term and its encoding reach exactly the same observables.
It completely ignores the possibility to translate a source term name into a sequence of names or to simulate a source term observable by a set of target term observables even if there is a bijective mapping between an observable and its translation. The proof in \cite{carbone} makes strongly use of this criterion; exploiting the fact that the match variables are free in the match prefix.
Gorla suggests success sensitiveness and operational correspondence instead. Note that we use operational correspondence|or more precisely soundness|only in the last step of the proof to argument that if a source term reaches success in all finite maximal executions its encoding does alike. Hence, for the presented case, the combination of operational soundness and success sensitiveness is a considerably weaker requirement than the variant of reasonableness.

Overall we conclude that, because of the large difference between success sensitiveness and the variant of reasonableness considered in \cite{carbone}, our set of criteria is considerably weaker and thus the presented result is strictly stronger.

\paragraph{Contribution 2.}
The proof in \cite{carbone} is, due to the stricter criteria, shorter and easier to follow than ours. But it also reveals less information on the reason for the separation result. In contrast, the presented approach reflects the intuition that communication is close to the behaviour of the match prefix. We show that among the native operators of the \piCal input and output are the only operators close enough to possibly encode the match operator, where the link names result from the translation of the match variables. But it also reveals the reason why communication is not strong enough.
Translated match variables have to be free in the encoding of the match {prefix}|{to} allow for a guarding input to receive a value for a match {variable}|{but} they also have to be {bound}|{to} avoid unintended interactions between the translated match variables of parallel match encodings.
The other \piCal operators cannot simulate this kind of binding.

\subsection{Encodings of the Match Prefix in Pi-Like Calculi}
\label{sec:encodeMatchInOtherCalculi}

As mentioned in the introduction there are some modifications and extensions of the \piCal that allow for the encoding of the match prefix. We briefly discuss four different approaches and their relation to our separation result.

In \cite{bodeiDeganoPriami05} the input prefix $ \piIn{x}{z} $ of the \piCal is replaced by a selective input $ \piIn{x}{z \in V} $. A term guarded by $ \piIn{x}{z \in V} $ and a term guarded by a matching output prefix $ \piOut{x}{y} $ can communicate (if they are composed in parallel and) only if the transmitted value $ y $ is contained in  the set $ V $ of names specified in the selective input prefix. Accordingly selective input can be used as a conditional guard. As pointed out in \cite{bodeiDeganoPriami05}, selective input allows to encode a match prefix $ \match{a}{b} P $ simply by $ \Res{x}{\left( \piOut{x}{a} \mid \piInput{x}{y \in \Set[]{ b }}{\left\lbr P \right\rbr} \right)} $, where $ \left\lbr P \right\rbr $ is the encoding of $ P $. Here the test for equality $ a = b $ is transferred into the test $ a \in \Set[]{ b } $. Thus it is not necessary to translate the match variables into communication channels, which allows for this simple encoding.

Mobile ambients \cite{cardelliGordon00} extend the asynchronous \piCal with ambients $ n\!\left[ \; \right] $, \ie sides or locations, that
\begin{inparaenum}[(a)]
	\item can contain processes and other ambients,
	\item can be composed in parallel to other ambients and processes, and
	\item whose name can be restricted to forbid interaction with its environment.
\end{inparaenum}
Moreover there are three additional actions prefixes:
\begin{inparaenum}[(1)]
	\item $ \textsf{in} \, n $ allows an ambient to enter another ambient named $ n $ by the rule $ m\!\left[ \textsf{in} \, n.P \mid Q \right] \mid n\!\left[ R \right] \step n\!\left[ m\!\left[ P \mid Q \right] \mid R \right] $,
	\item $ \textsf{out} \, n $ allows an ambient to exit its own parent named $ n $ by the rule $ n\!\left[ m\!\left[ \textsf{out} \, n.P \mid Q \right] \mid R \right] \step m\!\left[ P \mid Q \right] \mid n\!\left[ R \right] $, and
	\item $ \textsf{open} \, n $ dissolves an ambient with name $ n $ by the rule $ \textsf{open} \, n.P \mid n\!\left[ Q \right] \step P \mid Q $.
\end{inparaenum}
As a consequence, communication steps become locale, \ie can occur only if both communication partners are located in parallel within the same ambient. Hence channel names become superfluous, since communications on different channels can be simulated by communications within different ambients. So the $ \pi $-input $ \piInput{x}{z}{P} $ is replaced by $ \left( z \right)\!.P $ and the asynchronous output $ \piOut{x}{y} $ is replaced by $ \left\langle y \right\rangle $.
As pointed out in \cite{vig}, mobile ambients can encode the match prefix. They suggest to encode a match prefix $ \match{a}{b}P $ by the term $ M = \Res*{xy}{x\!\left[ \textsf{open} \, a.y\!\left[ \textsf{out} \, x \right] \mid b\!\left[ \; \right] \right] \mid \textsf{open} \, y.\textsf{open} \, x.\left\lbr P \right\rbr} $, where $ \left\lbr P \right\rbr $ is the encoding of $ P $.
Since there are no channel names, the match variables are translated into the new capabilities of mobile ambients, namely into $ \textsf{open} \, a $ and an ambient with name $ b $. $ \textsf{open} \, a $ can only be reduced if $ a = b $, \ie if either $ a = b $ holds from the beginning or if $ a $ and $ b $ are unified by a substitution induced by a surrounding input, as \eg in $ \left( b \right)\!.M \mid \left\langle a \right\rangle $. Note that, to enable this substitution, the match variables $ a $ and $ b $ {have}|{as} shown in our proof {above}|{to} be translated into free names. Here the ambient $ x $ and its restriction ensure that there are no unintended interactions between the translated match variables of parallel match encodings. More precisely the ambient $ x $ encapsulates the translation of the test for equality $ a = b $ and the restriction $ \Res{x}{} $ ensures that the environment cannot interfere, \ie no other action on the names $ a $ or $ b $ can reduce the $ \textsf{open} \, a $ or can target the ambient $ b $ inside of $ x $, because the restriction forbids other processes to enter $ x $. So in mobile ambients it is not necessary to translate the match variables into bound names, which allows for the encoding.

\cite{vivas} extend the pi-calculus with an additional operator $ P \setminus z $ called blocking. Blocking forbids for $ P $ to perform a visible action with the blocked name $ z $ as subject or bound object. By \cite{vivas} this allows to encode a match prefix $ \match{a}{b}P $ by the term $ \Res*{w}{\left( \piOutput{a}{y}{\nullTerm} \mid \piInput{b}{z}{\piOutput{w}{y}{\nullTerm}} \right) \setminus a \setminus b \mid \piInput{w}{z}{\left\lbr P \right\rbr}} $, where $ \left\lbr P \right\rbr $ is the encoding of $ P $ and $ z \notin \freeNames{P} $. As suggested by our proof above, the match prefix is translated into a communication and the match variables are translated into the channel names of the respective communication partners. To communicate the channel names have to be equal, \ie again either $ a = b $ holds from the beginning or $ a $ and $ b $ have to be unified by a substitution induced by a surrounding input. To enable such a substitution, the match variables $ a $ and $ b $ {have}|{as} shown in our proof {above}|{to} be translated into free names. Here the new blocking operator ensures that there are no unintended interactions between the translated match variables of parallel match encodings. More precisely $ M \setminus a \setminus b $ ensures that $ M $ cannot interact with another term over $ a $ or $ b $|{thus} blocking behaves as a binding operator \wrt reduction {steps}|{but} blocking does not bind the names $ a $ and $ b $ such that they can be affected by substitution.
Thus our proof explicitly reveals the features that due to \cite{vivas} allow to encode the match prefix by means of blocking.

\cite{carbone} extends the \piCal by so-called polyadic synchronisation, \ie instead of single names as in the \piCal channel names can be constructed by combining several names. Thus \eg in the variant of the \piCal with polyadic synchronisation, where each channel name consists of exactly two names, the input prefix becomes $ \piIn{x_1 \cdot x_2}{z} $ and the (matching) output prefix becomes $ \piOut{x_1 \cdot x_2}{y} $. An input and an output guarded term (that are composed in parallel) can communicate if the composed channel names are equal. By \cite{carbone} this extension allows to encode the match prefix. They suggest to translate $ \match{a}{b}P $ by $ \Res*{x}{\piOut{x \cdot b}{y} \mid \piInput{x \cdot a}{z}{\left\lbr P \right\rbr}} $, where $ \left\lbr P \right\rbr $ is the encoding of $ P $ and $ x, z \notin \freeNames{P} $. Again, as suggested by our proof above, the match prefix is translated into a communication and the match variables are translated into (parts of) the channel names of the respective communication partners. But polyadic synchronisation allows to combine the free match {variables}|{used} to allow for a guarding input to receive a {value}|{and} the bound name $ x $|{used} to avoid unintended interactions between the translated match variables of parallel match {encodings}|{within} a single communication channel.
Again our proof explicitly reveals the features that due to \cite{carbone} allow to encode the match prefix by means of polyadic synchronisation.

\section{Conclusions}
\label{sec:conclusions}

We provide a novel separation result showing that there is no valid encoding from the full \piCal into its variant without the match prefix. In contrast to the former approach in \cite{carbone} we strengthen the result in two ways:
\begin{compactenum}
	\item We considerably weaken the set of requirements, in particular with respect to the criterion that is called reasonable semantics in \cite{carbone}. Instead, we use the framework of criteria designed by Gorla for language comparison.
	\item The so obtained proof reflects our intuition on the match prefix and reveals the problem that prevents its encoding. A valid encoding of the match prefix would need to translate the prefix into a (set of) communication step(s) on links that result from the translation of the match variables. These links have to be free|to allow for a guarding input to receive a value for a match variable|but they also have to be bound|to avoid unintended interactions between parallel match encodings. This kind of binding cannot be simulated by a \piCal operator different from the match prefix.
\end{compactenum}
This further underpins that the match prefix cannot be derived in the \piCal.

In Section~\ref{sec:encodeMatchInOtherCalculi} we discuss four modifications and extensions of the \piCal that allow to encode the match prefix. 
In the first encoding approach the match prefix is replaced by another (more general) conditional guard. But the other approaches use extensions or modifications of the \piCal to encode the match prefix by using features that allow to circumvent the binding problem in the encoding of the match prefix that is pointed out in our proof.
Thus further works can use the here presented explicit formulation of the reason, that forbids for encodings of the match prefix in the \piCal, to encode the match prefix in other calculi.

\addcontentsline{toc}{section}{References}
\bibliographystyle{eptcs}
\bibliography{references}

\begin{thebibliography}{10}
\providecommand{\bibitemdeclare}[2]{}
\providecommand{\surnamestart}{}
\providecommand{\surnameend}{}
\providecommand{\urlprefix}{Available at }
\providecommand{\url}[1]{\texttt{#1}}
\providecommand{\href}[2]{\texttt{#2}}
\providecommand{\urlalt}[2]{\href{#1}{#2}}
\providecommand{\doi}[1]{doi:\urlalt{http://dx.doi.org/#1}{#1}}
\providecommand{\bibinfo}[2]{#2}

\bibitemdeclare{article}{bodeiDeganoPriami05}
\bibitem{bodeiDeganoPriami05}
\bibinfo{author}{C.~\surnamestart Bodei\surnameend},
  \bibinfo{author}{P.~\surnamestart Degano\surnameend} \&
  \bibinfo{author}{C.~\surnamestart Priami\surnameend} (\bibinfo{year}{2005}):
  \emph{\bibinfo{title}{{Checking security policies through an enhanced Control
  Flow Analysis}}}.
\newblock {\sl \bibinfo{journal}{Journal of Computer Science}}
  \bibinfo{volume}{13}(\bibinfo{number}{1}), pp. \bibinfo{pages}{49--85}.

\bibitemdeclare{article}{carbone}
\bibitem{carbone}
\bibinfo{author}{M.~\surnamestart Carbone\surnameend} \&
  \bibinfo{author}{S.~\surnamestart Maffeis\surnameend} (\bibinfo{year}{2003}):
  \emph{\bibinfo{title}{Polyadic Synchronisation in the $\pi$-calculus}}.
\newblock {\sl \bibinfo{journal}{Nordic Journal of Computing}}
  \bibinfo{volume}{10}(\bibinfo{number}{2}), pp. \bibinfo{pages}{70--98}.

\bibitemdeclare{article}{cardelliGordon00}
\bibitem{cardelliGordon00}
\bibinfo{author}{L.~\surnamestart Cardelli\surnameend} \& \bibinfo{author}{A.D.
  \surnamestart Gordon\surnameend} (\bibinfo{year}{2000}):
  \emph{\bibinfo{title}{{Mobile ambients}}}.
\newblock {\sl \bibinfo{journal}{Theoretical Computer Science}}
  \bibinfo{volume}{240}(\bibinfo{number}{1}), pp. \bibinfo{pages}{177--213}.

\bibitemdeclare{article}{giunti13}
\bibitem{giunti13}
\bibinfo{author}{M.~\surnamestart Giunti\surnameend} (\bibinfo{year}{2013}):
  \emph{\bibinfo{title}{Algorithmic type checking for a pi-calculus with name
  matching and session types}}.
\newblock {\sl \bibinfo{journal}{The Journal of Logic and Algebraic
  Programming}} \bibinfo{volume}{82}(\bibinfo{number}{8}), pp.
  \bibinfo{pages}{263--281}.

\bibitemdeclare{article}{gorla10}
\bibitem{gorla10}
\bibinfo{author}{D.~\surnamestart Gorla\surnameend} (\bibinfo{year}{2010}):
  \emph{\bibinfo{title}{{A taxonomy of process calculi for distribution and
  mobility}}}.
\newblock {\sl \bibinfo{journal}{Distributed Computing}}
  \bibinfo{volume}{23}(\bibinfo{number}{4}), pp. \bibinfo{pages}{273--299}.

\bibitemdeclare{article}{gorla}
\bibitem{gorla}
\bibinfo{author}{D.~\surnamestart Gorla\surnameend} (\bibinfo{year}{2010}):
  \emph{\bibinfo{title}{{Towards a Unified Approach to Encodability and
  Separation Results for Process Calculi}}}.
\newblock {\sl \bibinfo{journal}{Information and Computation}}
  \bibinfo{volume}{208}(\bibinfo{number}{9}), pp. \bibinfo{pages}{1031--1053}.

\bibitemdeclare{article}{gorlaNestmann}
\bibitem{gorlaNestmann}
\bibinfo{author}{D.~\surnamestart Gorla\surnameend} \&
  \bibinfo{author}{U.~\surnamestart Nestmann\surnameend}
  (\bibinfo{year}{2014}): \emph{\bibinfo{title}{{Full Abstraction for
  Expressiveness: History, Myths and Facts}}}.
\newblock {\sl \bibinfo{journal}{Mathematical Structures in Computer Science}}.
\newblock \bibinfo{note}{To appear.}

\bibitemdeclare{book}{Milner1999}
\bibitem{Milner1999}
\bibinfo{author}{R.~\surnamestart Milner\surnameend} (\bibinfo{year}{1999}):
  \emph{\bibinfo{title}{{Communicating and Mobile Systems: The
  $\pi$-Calculus}}}.
\newblock \bibinfo{publisher}{Cambridge University Press}.

\bibitemdeclare{article}{milnerParrowWalker92}
\bibitem{milnerParrowWalker92}
\bibinfo{author}{R.~\surnamestart Milner\surnameend},
  \bibinfo{author}{J.~\surnamestart Parrow\surnameend} \&
  \bibinfo{author}{D.~\surnamestart Walker\surnameend} (\bibinfo{year}{1992}):
  \emph{\bibinfo{title}{{A Calculus of Mobile Processes, Part I and II}}}.
\newblock {\sl \bibinfo{journal}{Information and Computation}}
  \bibinfo{volume}{100}(\bibinfo{number}{1}), pp. \bibinfo{pages}{1--77},
  \doi{10.1016/0890-5401(92)90008-4, 10.1016/0890-5401(92)90009-5}.

\bibitemdeclare{inproceedings}{milnerSangiorgi92}
\bibitem{milnerSangiorgi92}
\bibinfo{author}{R.~\surnamestart Milner\surnameend} \&
  \bibinfo{author}{D.~\surnamestart Sangiorgi\surnameend}
  (\bibinfo{year}{1992}): \emph{\bibinfo{title}{{Barbed Bisimulation}}}.
\newblock In: {\sl \bibinfo{booktitle}{Proceedings of ICALP}}, {\sl
  \bibinfo{series}{LNCS}} \bibinfo{volume}{623}, \bibinfo{publisher}{Springer},
  pp. \bibinfo{pages}{685--695}, \doi{10.1007/3-540-55719-9\_114}.

\bibitemdeclare{article}{palamidessi03}
\bibitem{palamidessi03}
\bibinfo{author}{C.~\surnamestart Palamidessi\surnameend}
  (\bibinfo{year}{2003}): \emph{\bibinfo{title}{{Comparing the Expressive Power
  of the Synchronous and the A\-syn\-chro\-nous $\pi$-calculus}}}.
\newblock {\sl \bibinfo{journal}{Mathematical Structures in Computer Science}}
  \bibinfo{volume}{13}(\bibinfo{number}{5}), pp. \bibinfo{pages}{685--719}.

\bibitemdeclare{article}{parrow08}
\bibitem{parrow08}
\bibinfo{author}{J.~\surnamestart Parrow\surnameend} (\bibinfo{year}{2008}):
  \emph{\bibinfo{title}{{Expressiveness of Process Algebras}}}.
\newblock {\sl \bibinfo{journal}{Electronic Notes in Theoretical Computer
  Science}} \bibinfo{volume}{209}, pp. \bibinfo{pages}{173--186},
  \doi{10.1016/j.entcs.2008.04.011}.

\bibitemdeclare{phdthesis}{petphd}
\bibitem{petphd}
\bibinfo{author}{K.~\surnamestart Peters\surnameend} (\bibinfo{year}{2012}):
  \emph{\bibinfo{title}{Translational Expressiveness}}.
\newblock \bibinfo{type}{{PhD}}, \bibinfo{school}{Technische Universität
  Berlin}.

\bibitemdeclare{article}{petersNestmann14}
\bibitem{petersNestmann14}
\bibinfo{author}{K.~\surnamestart Peters\surnameend} \&
  \bibinfo{author}{U.~\surnamestart Nestmann\surnameend}
  (\bibinfo{year}{2014}): \emph{\bibinfo{title}{{Breaking Symmetries}}}.
\newblock {\sl \bibinfo{journal}{To Appear in Mathematical Structures in
  Computer Science}}.

\bibitemdeclare{inproceedings}{petersNestmannGoltz13}
\bibitem{petersNestmannGoltz13}
\bibinfo{author}{K.~\surnamestart Peters\surnameend},
  \bibinfo{author}{U.~\surnamestart Nestmann\surnameend} \&
  \bibinfo{author}{U.~\surnamestart Goltz\surnameend} (\bibinfo{year}{2013}):
  \emph{\bibinfo{title}{{On Distributability in Process Calculi}}}.
\newblock In: {\sl \bibinfo{booktitle}{Proceedings of ESOP}}, {\sl
  \bibinfo{series}{LNCS}} \bibinfo{volume}{7792},
  \bibinfo{publisher}{Springer}, pp. \bibinfo{pages}{310--329}.

\bibitemdeclare{inproceedings}{pyn14}
\bibitem{pyn14}
\bibinfo{author}{K.~\surnamestart Peters\surnameend},
  \bibinfo{author}{T.~\surnamestart Yonova-Karbe\surnameend} \&
  \bibinfo{author}{U.~\surnamestart Nestmann\surnameend}
  (\bibinfo{year}{2014}): \emph{\bibinfo{title}{{Matching in the
  Pi-Calculus}}}.
\newblock In: {\sl \bibinfo{booktitle}{Proceedings of EXPRESS/SOS}}.
\newblock \bibinfo{note}{To appear.}

\bibitemdeclare{conference}{vig}
\bibitem{vig}
\bibinfo{author}{I.C.C. \surnamestart Phillips\surnameend} \&
  \bibinfo{author}{M.G. \surnamestart Vigliotti\surnameend}
  (\bibinfo{year}{2004}): \emph{\bibinfo{title}{{Electoral Systems in Ambient
  Calculi}}}.
\newblock In: {\sl \bibinfo{booktitle}{Proceedings of FoSSaCS}}, {\sl
  \bibinfo{series}{LNCS}} \bibinfo{volume}{2987}, pp.
  \bibinfo{pages}{408--422}.

\bibitemdeclare{article}{sangiorgi96}
\bibitem{sangiorgi96}
\bibinfo{author}{D.~\surnamestart Sangiorgi\surnameend} (\bibinfo{year}{1996}):
  \emph{\bibinfo{title}{{A theory of bisimulation for the $\pi$-calculus}}}.
\newblock {\sl \bibinfo{journal}{Acta Informatica}}
  \bibinfo{volume}{33}(\bibinfo{number}{1}), pp. \bibinfo{pages}{69--97}.

\bibitemdeclare{book}{sang}
\bibitem{sang}
\bibinfo{author}{D.~\surnamestart Sangiorgi\surnameend} \&
  \bibinfo{author}{D.~\surnamestart Walker\surnameend} (\bibinfo{year}{2001}):
  \emph{\bibinfo{title}{{The $\pi$-calculus: A Theory of Mobile Processes}}}.
\newblock \bibinfo{publisher}{Cambridge University Press}.

\bibitemdeclare{phdthesis}{vivas}
\bibitem{vivas}
\bibinfo{author}{J.L.F. \surnamestart Vivas\surnameend} (\bibinfo{year}{2001}):
  \emph{\bibinfo{title}{{Dynamic Binding of Names in Calculi for Mobile
  Processes}}}.
\newblock \bibinfo{type}{{PhD}}, \bibinfo{school}{Royal Institute of
  Technology, Sweden}.

\end{thebibliography}

\end{document}